\def\split{true}
\newcommand{\ignore}[1]{}
\renewenvironment{proof}[1][Proof.]{\textbf{#1} }{\ \rule{0.5em}{0.5em}}
\newtheorem{theorem}{Theorem}
\newtheorem*{theorem*}{Theorem}
\newtheorem{claim}{Claim}
\newtheorem{lemma}{Lemma}
\newtheorem{definition}{Definition}
\newtheorem{proposition}{Proposition}
\newtheorem{assumption}{Assumption}
\newtheorem{corollary}{Corollary}
\def\1{{1\hskip-2.5pt{\rm l}}}
\def\d{\delta}
\def\G{\Gamma}
\def\Pr{\textbf{{\rm Pr}}}
\def\N{\mathbb{N}}
\def\G1{G_{U\mbox-L}}
\def\SS{\mathcal{S}} %The event of reaching second stage.
\def\E{\mathbb{E}} %Expetation
\newcommand\ubar[1]{\underline{#1}} %underline
\pgfplotsset{my style/.append style={axis x line=middle, axis y line=middle, xlabel={$x$}, ylabel={$y$}, axis equal }}
\begin{document}
\title{I Want to Tell You?\\ [0.2em]\smaller{} Maximizing Revenue in First-Price Two-Stage Auctions\thanks{The authors wish to thank David Lagziel for his valuable comments.
G. Ashkenazi-Golan acknowledge the support of the Israel Science Foundation, grants \#217/17 and \#722/18, and NSFC-ISF, China Grant \#2510/17.
Y. Tsodikovich acknowledges the support of the French National Research Agency Grants ANR-17-EURE-0020.
The research of G. Ashkenazi-Golan and Y. Tsodikovich was partially supported by the COST Action CA16228 ``European Network for Game Theory'' (GAMENET).}}

\author{Galit Ashkenazi-Golan%
\thanks{The School of Mathematical Sciences, Tel Aviv
University, Tel Aviv 6997800, Israel. e-mail: \textsf{galit.ashkenazi@gmail.com} .},
Yevgeny Tsodikovich\thanks{Aix Marseille Univ, CNRS, AMSE, Marseille, France. e-mail: \textsf{yevgets@gmail.com}},
and Yannick Viossat%
\thanks{PSL, Université Paris–Dauphine, CEREMADE UMR7534, Paris, France}}

\maketitle

\begin{abstract}
\noindent A common practice in many auctions is to offer bidders an opportunity to improve their bids, known as a Best and Final Offer (BAFO) stage.
This final bid can depend on new information provided about either the asset or the competitors.
This paper examines the effects of new information regarding competitors, seeking to determine what information the auctioneer should provide assuming the set of allowable bids is discrete. 
The rational strategy profile that maximizes the revenue of the auctioneer is the one where each bidder makes the highest possible bid that is lower than his valuation of the item.
This strategy profile is an equilibrium for a large enough number of bidders, regardless of the information released. 
We compare the number of bidders needed for this profile to be an equilibrium under different information settings.
We find that it becomes an equilibrium with fewer bidders when less additional information is made available to the bidders regarding the competition.  
It follows that when the number of bidders is \emph{a priori} unknown, there are some advantages to the auctioneer to not reveal information.
\end{abstract}
\bigskip
\noindent {\emph{Journal} classification
number: D44, D82.}

\bigskip

\noindent Keywords: Auctions, multistage auctions, BAFO, information utilization.

\newpage

\section{Introduction}
This paper examines information effects in two-stage auctions involving a large number of bidders.
It is intuitively clear and theoretically understood that, under rather general assumptions, when large numbers of bidders participate in an auction, the strong competition drives the revenue of the auctioneer up.  
\cite{wilson1977bidding}, followed by \cite{milgrom1979convergence}, showed that in a first-price sealed-bid auction with affiliated values, as the number of bidders increases, the winning bid converges to the value the winner attributes to winning the auction.
This mechanism is not only efficient, but leaves the entire economic surplus in the hands of the auctioneer.\footnote{More about efficiency in large auctions can be found in \cite{pesendorfer2000efficiency}, \cite{swinkels2001efficiency}, \cite{bali2002asymptotic}, \cite{fibich2010large} and the references therein.
}

The importance of the number of bidders to the revenue of the auctioneer was highlighted by \cite{bulow1994auctions}, who showed that the additional revenue provided by one extra bidder could exceed the revenue from choosing the optimal allocation mechanism. 
Another reason for analyzing auctions with many bidders was stated in \cite{wilson1985incentive} and \cite{neeman2003effectiveness}:
it allows for lighter assumptions regarding bidders' knowledge of features such as the exact number of bidders or the distribution of their valuations.
Often, bidders only need to know that there is a large number of participants in the auction.
Furthermore, in many settings the exact equilibrium strategies are hard to compute, whereas the asymptotic strategies are more straightforward, as illustrated by \cite{bali2002asymptotic} among others.

In this paper we analyze two-stage auctions.
It is common to have two stages in procurement auctions, especially those relating to construction contracts. 
In the first stage, the bidders submit sealed bids that are evaluated and the highest ones are chosen.
These highest bidders move on to the second stage, known as the Best And Final Offer (BAFO) stage, where they are allowed to increase their bids.
The winner in the auction is chosen based on the BAFO-stage bids, and typically pays his bid (first-price auction).

This paper was inspired by a consulting job.
An auctioneer\footnote{Hereafter, as customary in auctions, the auctioneer will be referred to as a female ``seller'' and the bidders as male ``buyers'', although in procurement auctions the roles are reversed.} was seeking advice on what information to provide to the bidders before the BAFO stage in order to maximize her revenue.
In general, such information can help bidders learn either their private values or the level of competition.
While an extensive literature has examined the former information effect (e.g.,  \cite{pesendorfer2000efficiency, swinkels2001efficiency}), little is known about the latter.
Our paper fills this gap in the literature by considering which information, if any, should be given to bidders between stages to maximize revenue.
To isolate the problem of how to stimulate competition via information, we chose a model of private valuations.
Naturally, if there are common unknown components to the valuation, additional information teaches each bidder not only about the competition but also about his private value, which can significantly affect future bids.
Hence, in our model, information about the other bids and bidders does not affect valuations.
It only affects beliefs regarding the second-stage bids and how competitive the rest of the bidders might be.

While the literature suggests that a second-price sealed-bid auction is optimal (in terms of revenue for the auctioneer) for our setting \citep{myerson1981optimal}, this type of auction is not widely used.
Two-stage auctions, however, are common.
The optimal two-stage auction design when the values might be correlated was explored in \cite{perry2000sealed}, that considered a model where the second stage is a second-price sealed-bid auction.
The optimal information to be released by the auctioneer in a two-stage auction where the second stage is a second-price sealed-bid auction was investigated by \cite{ganuza2007competition}.
Our research is restricted to the more common auction structure, where the second stage is a first-price sealed-bid auction.
As far as we know, this model, particularly the question of the optimal information structure, has not previously been explored.

In our model, $n$ bidders with %independent identically distributed 
private values participate in a sealed-bid two-stage auction with discrete\footnote{We discuss the continuous case in Appendix \ref{app:cont_auc}.} bids.
The top two bidders from the first stage proceed to the second stage (the BAFO stage), where they can either increase their bid or maintain the previous one.
The highest bidder in the second stage wins the auction, pays this bid, and receives the good.
Between the stages, the seller can privately send a message to each of the finalists regarding others' bids, their ranking, or other relevant information.

This information can trigger two opposing effects.
On the one hand, information about competitors can drive  prices up, as bidders seek to increase their chances of victory.
On the other hand, the possibility of future information can reduce bids in the first round.
A bidder may choose to submit a lower bid in the first stage and increase it only if he finds himself facing a strong competitor.
Otherwise, the smaller initial bid becomes the final one, decreasing the revenue of the seller.

The main goal and contribution of this paper are to provide a theoretical answer to a practical question, and to deduce the conditions under which each of the above-mentioned effects is stronger.
More precisely, we seek to determine when the strategy profile that maximizes the revenue of the seller is an equilibrium, and when it is the unique equilibrium for different information structures.

We show that the revenue-maximizing strategy profile is an equilibrium for a large enough number of bidders, regardless of the information released between the stages.
We compare the number of bidders required for this strategy profile to be an equilibrium across different information structures.
We find that the less information\footnote{Information structures are pair-wise compared using an adaptation of \cite{blackwell1951comparison} to our model. See Definition \ref{def:more_inf}.} is given to the bidders, the fewer bidders are required to maintain this strategy profile as an equilibrium.
As a result, when the number of bidders is unknown, there are some advantages to conducting an auction without revealing information before the BAFO stage.

This result may appear to be in conflict with the linkage principle (\cite{milgrom1982theory} and \cite{milgrom1985economics}) which states that committing to reveal information before or along the auction benefits the auctioneer (even if the information is ``bad'').
As argued by \cite{milgrom1982theory}, revealing information decreases the uncertainty of the participants in the auction regarding the actual value of winning the auction, thus mitigating the winner's curse and leading to more competitive bidding.
In our model the bidders are certain regarding their valuation, and therefore there is no contradiction between the linkage principle and the advantage we find in releasing no information.

We conjecture not only that it takes fewer bidders to obtain the revenue-maximizing equilibrium in auctions where information is withheld than in auctions with information, but also that it remains the unique equilibrium under fewer bidders when less information is provided.
This strengthens the claim that providing additional information between stages is not to the advantage of the seller, as it gives bidders an opportunity to play an equilibrium where they bid less.
As is frequently the case (see, for example, \cite{quint2018theory}), the uniqueness of this equilibrium can only be proven in special cases.
We provide such a proof in Appendix \ref{app:uniquness} for a simple scenario and test this hypothesis for more complicated scenarios using a computer simulation.

Our analysis differs from common practice in the auction literature.
Instead of finding equilibrium strategies in a given auction or finding the optimal mechanism for specific settings, we solve an implementation problem. 
In our work, a particular strategy profile, the revenue-maximizing profile, is desired and we construct an information structure in which this profile is an equilibrium in a two-stage auction with discrete bids and many bidders.
For a small number of bidders, multiple equilibria can exist, including equilibria with mixed actions or relatively low bids.
The example in Section \ref{sect:smallnumberofbidders} shows that the optimal information structure in an auction with a small number of bidders depends on the fine details of the value distribution.

The implication of our results, at least for a large number of bidders, is that information provided before the BAFO stage might decrease the revenue of the seller.
Hence, when the sole purpose of the BAFO stage is to encourage competition, providing no  information tends to be better for the seller.
It also renders the BAFO meaningless.
Since the bidders learn nothing between stages, the optimal strategy is to bid the same amount in both stages.
In such situations, there is an argument for avoiding the BAFO stage altogether, which can reduce costs and prevent delays \citep{ahadzi2001private,dudkin2006transaction}.

%%%%%%%%%%%%%%%%%
The remainder of the paper is organized as follows.
This introduction is followed by a short survey of related literature and findings regarding multi-stage auctions.
Section \ref{Sect:model} formally presents the model and the various information structures.
In Section \ref{two_lemmas} we present two useful equivalence lemmas, which help simplify the analysis of auctions where information is not revealed.
Section \ref{example} demonstrates the model through two examples and sets the ground for the main results, which are presented in Section \ref{results}.
Section \ref{Sect:Discussion} includes the concluding remarks and a discussion of the remaining open questions.
A partial answer regarding the uniqueness of the equilibrium is given in Appendix \ref{app:uniquness}.
Finally, in Appendix \ref{app:otherassumptionFn} we present a more general but less intuitive assumption regarding the joint distribution of the private values that remains consistent with our results and in Appendix \ref{app:cont_auc} we discuss auctions with a continuum of private values and possible bids.
We show that in the continuous case, as in the discrete, there are good reasons for not revealing information.

\subsection{Motivation and Related Literature}
While commonly used, two-stage auctions with a BAFO stage have received little research attention. 
For example, the Federal Transit Administration recommends a BAFO stage in its manual\footnote{Best Practices~Procurement Manual, Federal Transit Administration, November 2001.} as a means to conclude the auction and receive improved offers.
A BAFO stage is also commonly used in the UK and France as the final stage of procurement auctions \citep{noumba2005private}.
It appears in large-scale infrastructure projects around the world, such as the water distillation factory in Ashkelon, Israel  \citep{SAUVETGOICHON200775} and in prison procurement and operations in South Africa \citep{merrifield2002asset}.
It was even used as a tie-breaking stage when the auctioneer could not decide between two offers in terms of highest value for money, and decided to implement an additional (unplanned) BAFO stage \citep{rintala2008organizing}. 
Furthermore, it is the US government's preferred method of buying goods, as suggested in \cite{chelekis1992official} and illustrated in \cite{roll2000online,finley2001dynamic} and others.

The World Bank, in a procurement guide entitled ``Negotiation and Best and Final Offer (BAFO)''\footnote{\href{http://pubdocs.worldbank.org/en/663621519334519385/Procurement-Guidance-Negotiation-and-Best-Final-Offer.pdf}{http://pubdocs.worldbank.org/en/663621519334519385/Procurement-Guidance-Negotiation-and-Best-Final-Offer.pdf}} states that:
\begin{quote}
``BAFO is appropriate when the procurement process may benefit from Bidders/Proposers having a final opportunity to improve their Bid/Proposal, including by reducing prices, clarifying or modifying their Bid/Proposal, or providing additional information.
It is normally particularly effective when markets are known to be highly competitive and there is strong competitive tension between Bidders/Proposers.''
\end{quote}
Then the guide lists the objectives of the BAFO -- the two main aims being to increase understanding by bidders of the auctioneer's requirements and to enhance competition among bidders who have made a proposal.
The results presented here mainly relate to the second aim, addressing the following question: does the BAFO enhance competition?
We also investigate its effectiveness when markets are highly competitive, as suggested by the World Bank's guide.

Another possible reason to employ two-stage auctions is for screening purposes: the auctioneer screens out bidders that do not meet some minimum quality or price standards, and in the second stage, she runs an English auction among the remaining bidders.
Thus, the purpose of the first stage is to ensure that the good provided meets the minimum requirements, and the only remaining issue, the price, is settled in the second stage.

Screening can also serve the purpose of limiting competition to encourage bidders to participate in an auction with entry costs.
For example, in indicative bidding \citep{ye2007indicative, quint2018theory} the seller asks bidders to submit non-binding first-stage bids to evaluate their interest in the good.
She then chooses the best bidders to proceed to the second stage, in which they prepare their ``real'' bid.
Since preparing the second-stage bid is costly (requires time, effort, learning about the asset, and so on), it is important to reassure the bidders that their chances of winning are high enough to justify the entry cost.
This cost also discourages bidders from bidding high just to qualify, as they are not sure that they want to participate in the second stage.

Multiple stages can also help bidders learn their rivals private value.
This happens when the bidder does not know for certain his own valuation (but rather has an estimation of this value given his signal) and the values are correlated (as in \cite{milgrom1982theory}).
In this case, the information about bids of others changes the bidder's own evaluation of the gain in winning the auction.
In such settings, a sealed-bid first-price auction is significantly different from an English ascending auction.
In the former, bids are placed according to initial evaluations of the values, there is no updating, and the chances of overbidding and suffering from the ``winner's curse'' are higher.
In contrast, the constant updating of a bidder's estimation of the value during an English auction, based on the current price and the number of buyers that quit at each level, can reduce bids and prevent such problems \citep{perry2000sealed}.

Similarly, two-stage auctions can be used to extract information in asymmetric settings, when there is one bidder more informed than the others (and possibly the seller), such as in the case of privatization auctions \citep{caffarelli1998auction,perry2000sealed,dutra2002hybrid,ncc2007}. 
This can work even with non-binding first stage bids \citep{boone2009optimal}, since the more informed bidder has an incentive not to bid too low and be eliminated when the asset is costly and by doing so -- reveal some of his information.
In particular, when the second stage is a second-price auction between the two highest bidders, \cite{hernando2018inefficient} showed that in equilibrium the more informed bidder submits his private value in both stages.

Since we are studying the effect of information on bidding competition, we assume that private values are known from the outset of the auction.
Bids are therefore updated in the second round for the sole purpose of increasing the chances of winning.
This can occur when the bidder learns the other bids and realizes that he is about to lose.
%However, in practice, it also occurs without any additional information.
For example, the following question appears in the FAQ section of the website of the US Federal Transit Administration\footnote{\href{https://www.transit.dot.gov/funding/procurement/third-party-procurement/best-and-final-offer}{https://www.transit.dot.gov/funding/procurement/third-party-procurement/best-and-final-offer}}:
\begin{quote}
Q: Is it permissible for an offeror to lower its price for the Best and Final Offer (BAFO) without any basis for the change, other than trying to beat out the competition, or does the price reduction have to be based on changes or other clarifications discussed during the presentation?
We are participating as an offeror and have been asked for a BAFO.

A: Yes, the basis for change may be solely the desire to increase your chances of winning the contract award by lowering your price. If the contract you are competing for is a cost-reimbursement type contract, the procuring agency may well ask you for your rationale in lowering the original cost estimates to do the work. [...]
\end{quote}
In theory, a bidder could also increase his bid without any additional information.
In Lemma \ref{thmNoInfoIsFPSA} we show that this kind of bidding is sub-optimal.
If it is profitable for the bidder to submit a better offer in the second stage without additional information, he might just as well submit it in the first stage and increase his chances of winning, as nothing changes between the stages.

%%%%%%%%%%%%%%%%%%%%%%%%%%%%%%%%%%%%%%%%%%%%%%%%%%%%%%%%%%%%%%%

\section{The Model}\label{Sect:model}
We consider a two-stage auction with private valuations.
In the first stage, all bidders submit their bids. 
The two highest bidders proceed to the second stage, called Best-And-Final-Offer (BAFO) stage.
The first-stage bids are binding in the sense that each participant in the BAFO stage must bid at least as high as his first-stage bid. 
The highest bidder at the BAFO stage wins the auction and pays his second-stage bid.
In both stages, we use a symmetric tie-breaking rule.

Formally, let $N=\left\{1,\ldots,n\right\}$ be the set of $n\geq 3$ bidders, $V=\left\{v^1,\ldots,v^K\right\}$ the set of possible private valuations (types) in ascending order and $F_n$ the joint distribution assigning types to bidders.
In Assumption \ref{uniformfullsupport} we explain how $F_n$ changes with $n$.
This assumption is slightly generalized in Appendix \ref{app:otherassumptionFn}.

After observing his own valuation, bidder $i$ places his first-stage bid, $b_i^1$, chosen from a given set of allowable bids $B=\left\{ b^1,\ldots,b^M\right\}$ (in ascending order).
Presetting a set of allowable bids is a common practice, as the auctioneer prefers bids to be ``rounded'' or in fixed increments; moreover,  this is a natural limitation as the currency is not continuous.
We assume that the set $B$ is rich enough so that for any $v^j\in V$, there is a bid strictly between $v^{j-1}$ and $v^j$ (for this matter, $v^0=-\infty$).
The quantities $\beta^j=\max\left\{b\in B\vert b<v^j\right\}$ are thus well defined.%, and for $j>1$, $\beta^j>v^{j-1}$.
\footnote{If we remove the richness of allowable bids assumption, the only thing lost is the efficiency of the auction -- the item might no longer go to the bidder who values it highest, since the bids sometimes cannot differentiate between bidders with different private values.}

The two\footnote{Our results can be generalized to cases where more than two bidders proceed to the BAFO stage.} bidders who bid highest in the first stage move on to the BAFO stage.	
These two bidders receive information from the auctioneer regarding the results of the first stage before the second stage starts.
The information is a function from the set of first-stage bids to a set of vectors of messages (one for each bidder).
Let $\Theta_i$ be the set of possible messages bidder $i$ may receive.
The information structure is $\Theta: B^n\rightarrow \Delta\left(\Theta_1\times\ldots\times \Theta_n\right)$, where as usual, for any finite set $X$, $\Delta(X)$ denotes the set of probability distributions over $X$.
The realized vector of messages is denoted by $(\theta_1,\ldots,\theta_n)\in \Theta_1\times\ldots\times \Theta_n$.

In the second stage, the two bidders place a second bid (once again, within $B$), the only restriction being that their second bid must be at least as high as their first bid.
Hence, a behavioral strategy of bidder $i$ is a function $\sigma_i=(\sigma_i^1,\sigma_i^2)$ such that $\sigma_i^1: V\rightarrow \Delta(B)$ and $\sigma_i^2: V\times B\times \Theta_i \rightarrow \Delta(B)$.
The strategy profile of all bidders is denoted by $\sigma=\bigtimes\limits_{i\in N} \sigma_i$ and for each $i$, the strategy profile of all bidders except $i$ is denoted by $\sigma_{-i}=\bigtimes\limits_{\substack{j\in N \\ j\neq i}} \sigma_j$.
The highest bidder at the second stage wins the auction (ties are broken randomly with equal probabilities).
The utility for the winner is the difference between his valuation and the amount he pays -- his second-stage bid.

We wish to compare information structures. 
To this end, we adjust the partial order over experiments introduced by \cite{blackwell1951comparison}, which reflects the amount of information provided by the experiment.
Intuitively, an experiment $A$ is more informative than an experiment $B$ if there exists a garbling function from the results of experiment $A$ to the results of experiment $B$ which generates the same distribution of results as experiment $B$.
For example, flipping a coin and observing the result is more informative than letting someone else do it and then report the truth with some probability and lie otherwise.
This second step of telling the truth or lying is exactly the above mentioned garbling function.

\begin{comment}
More precisely, \cite{blackwell1951comparison} defines an experiment to be a quadruple $(\Xi, \mathbb{P}, \Omega, Y)$ where
\begin{enumerate*}[label=(\roman*)]
\item $\Xi$ is a set of parameters;
\item $\mathbb{P}$ is a probability measure over $\Xi$;
\item $\Omega$ is a set of possible outcomes; and
\item $Y$ is a stochastic map, $Y:\Xi\rightarrow \Delta(\Omega)$.
\end{enumerate*}
The experiment  $(\Xi,\mathbb{P}, \Omega^1, Y^1)$ is
\textit{more informative} than $(\Xi, \mathbb{P}, \Omega^2, Y^2)$ if there exists $L:\Omega^1\rightarrow \Delta(\Omega^2)$, such that for all $
\omega_2\in \Omega^2$, $\sum_{\xi \in \Xi}\mathbb{P}(\xi)\cdot Y^2(\omega_2|\xi)=\sum_{\xi\in \Xi}\sum_{\omega_1\in \Omega_1}\mathbb{P}(\xi)Y^1(\omega^1|\xi)\cdot L_{\omega_1}(\omega_2)$.
In words, $Y^1$ is more informative than $Y^2$ if there exists a garbling function, from the outcomes of $Y^1$ to distributions over the set of $Y^2$'s outcome, that generates the same distribution of signals as $Y^2$.
The following definition adapts the definition of \cite{blackwell1951comparison} to our model.
\end{comment}

\begin{definition}\label{def:more_inf}
The information structure $\Theta: B^n\rightarrow \Delta\left(\Theta_1\times\ldots\times \Theta_n\right)$, is \emph{more informative than} the information structure $\Theta': B^n\rightarrow \Delta\left(\Theta_1\times\ldots\times \Theta_n\right)$, 
if for every bidder $i\in N$, and every strategy profile of the others $\sigma_{-i}$, there exists $L_i:\Theta_i\times B \rightarrow \Delta(\Theta_i)$, such that for all $\theta_i'\in \Theta_i$ and all first-stage bids of bidder $i$, $b_i\in B$,
\begin{equation}\label{eq_more_info}
\sum_{b_{-i}\in B^{n-1}}\Pr_{\sigma_{-i}}(b_{-i})\Pr_{\Theta'}(\theta_i'\vert b_{-i},b_i)=
\sum_{b_{-i}\in B^{n-1}}\sum_{\theta_i\in \Theta_i}\Pr_{\sigma_{-i}}(b_{-i})\Pr_{\Theta}(\theta_i\vert b_{-i},b_i)\cdot \Pr(L_i(\theta_i,b_i)=\theta_i').
\end{equation} 
\end{definition}

A more natural definition might be to assume that $L_i$ depends solely on $\theta_i,b_i$ and is independent of the strategy of the others $\sigma_{-i}$.
This is similar to other adaptations of \cite{blackwell1951comparison} to repeated games with imperfect monitoring, such as in \cite{lehrer1992equilibrium,hillas2016correlated}, and \cite{ashkenazi2019blackwell}.
All our results hold under this stronger definition, since it implies Definition \ref{def:more_inf}.
However, the stronger definition is less applicable as it allows less information structures to be pair-wise compared relative to Definition \ref{def:more_inf}.
Moreover, when discussing the equilibrium strategies or a non-informative information structure (Definition \ref{def:non_info} below), we assume that $\sigma_{-i}$ is known to bidder $i$, so fitting a garbling function to each strategy $\sigma_{-i}$ does not require additional information that bidder $i$ does not have.

In this paper we compare more informative information structure to less informative ones.
In particular, we compare \emph{informative} information structures to \emph{non-informative} ones.\footnote{Often referred to as ``with information'' and ``without information'', respectively.}
In the context of this paper, the bidders may use their information when deciding on their second-stage bids.
In equilibrium, a bidder optimizes his gain at the second stage given the joint distribution of values, his value, his first-stage bid, his opponents' strategies, and the signal he observes between the stages. 
Given these, a bidder who proceeds to the second stage rationally forms his beliefs regarding his opponent's bid at the second stage based on Bayes' Rule, and then he best responds to these beliefs.
An information structure is non-informative if all signals conveyed with positive probability do not affect these beliefs, whatever the strategies of the players are. 

\begin{definition}\label{def:non_info}
Suppose that bidder $i$ with private value $v_i$ is selected to proceed to the second stage after biding $b_i^1$, and let $B_i^{opp}$ be the random variable that represents the bid of the other bidder in the second stage. %, given $F_n$ and $\Sigma$. 
An information structure $\Theta$ is \emph{non-informative} if the conditional distribution of $B_i^{opp}$ does not change due to the information relayed to bidder $i$. 
More precisely, if
\begin{equation}
\forall F_n,\forall \sigma_{-i},\forall \theta_i\in \Theta_i, \forall b\in B: \Pr(B_i^{opp}=b|F_n,v_i,b_i^1,\sigma_{-i},\theta_i)=\Pr(B_i^{opp}=b|F_n,v_i,b_i^1,\sigma_{-i}),
\end{equation}
where 
$v_i\in V$ is the private value of bidder $i$, and  $\sigma_{-i}$ is the strategy profile of everyone but bidder $i$.
We implicitly assume that we condition only on events of non-zero probability, so given $F_n,v_i,b_i^1,\sigma_{-i}$, bidder $i$ has a non-zero probability to proceed to the second stage (and it happened) and the $\forall \theta_i\in \Theta_i$ part refers only to messages with non-zero probability given the first-stage bids.

Otherwise, the information structure is \emph{informative}.
\end{definition}

A non-informative information structure is such that the message from the auctioneer does not change the belief of bidder $i$ regarding the distribution of the second stage bids of the other finalist.
The simplest non-informative information structure is that where for all $i$, $\Theta_i$ is a singleton and the same message is conveyed to the bidders regardless of the actual bids.
In Lemma \ref{lem:non_info_iff_iid} we present a simplifying characterization of non-informative information structures by showing that an information structure is non-informative if and only if the information to bidder $i$ is independent of the first-stage bids of the others (it can depend on his own bid).

In an informative information structure, there is a \emph{possibility} that the information changes this belief.
From Definition \ref{def:more_inf}, it is clear that the most informative information structure is the one that reveals all the first-stage bids to the finalists, which is all the information that the seller has.
We refer to this structure as \emph{the fully-informative} information structure.
Given the bid of bidder $j$, bidder $i$ can update his belief about $j$'s private value and, combined with $\sigma_{-i}$, his belief regarding $j$'s second-stage bid.

What we are examining here is the effect of the number of bidders and the information structure on equilibrium strategies and equilibrium payoff.
Therefore, and although Lemma \ref{thmNoInfoIsFPSA} and Theorem \ref{thmEFFEqWithIsAlsoWithout}  are true in general, 
the solution concept we study is symmetric equilibrium, i.e. an equilibrium where all bidders with the same private value use the same strategy.
We also should specify how the joint distribution of valuations changes when the number of bidders change, that is, how $F_n$ changes with $n$.
Clearly, this is not an issue if the valuations are i.i.d.
Otherwise, it is essential to add an assumption on the conditional distribution of the other valuations given a particular bidder who has the highest private value.

Let $F_n$ be some distribution and for every  $k\in\{1,\ldots,K\}$, we denote by $D_k$ the event that the types of all bidders are at most $v^k$.
Given  $D_k$, and given that the valuation of a specific bidder $i$ is $v^k$,  we denote by $\d^k(F_n)$ the minimal probability of each other bidder having the type $v^k$.
The minimization is done in the following manner: 
given $D_k$ and given the valuations of every possible subset of bidders that does not include bidder $j$, the latter has a probability of at least $\d^k(F_n)$ of having the type $v^k$.
The uniform bound over all $k$ is defined by $\d(F_n)\coloneqq\min_k \d^k(F_n)$.

For our results to hold valuations can be correlated, but not \emph{too correlated} in the sense that type $v^k$ cannot be ruled out for bidder $j$, regardless of the types of all the other bidders.
For example, when the valuations are i.i.d. (with full support over $V$), $\d^k(F_n)=\tfrac{p^k}{\sum\limits_{j\leq k} p^j}$, where $p^j$ is the probability of each bidder having type $v^j$, and $\d(F_n)=\min\limits_k\tfrac{p^k}{\sum\limits_{j\leq k} p^j}$.

We assume that as $n$ increases, these lower bounds become strictly positive and independent of $n$:
\begin{assumption}\label{uniformfullsupport}
The family of distributions $F_n$ has \emph{uniform full support}, i.e.
\begin{equation}
\d\coloneqq\liminf\limits_{n\to\infty} \d(F_n)>0.
\end{equation}
\end{assumption}
As demonstrated above, this condition is fulfilled when the valuations are i.i.d. with full support over $V$, since $\d=\d(F_n)=\min\limits_k\tfrac{p^k}{\sum\limits_{j\leq k} p^j}>0$.

We condition on $D_k$ to focus on the bidders with the highest realized type since typically, when the number of bidder is large or when the bids are increasing with the valuation, in equilibrium, a bidder with high private value always outbids a bidder with low private value, so a bidder with a private value $v^k$ (for $k<K$) has a positive probability to win only conditioning on $D_k$.
We formalize this reasoning in the Results section.

The intuition behind the uniform full support assumption can be taken from the following pathological example.
Assume $F_n$ is such that it assigns to one of the bidders the highest type and to all the others the lowest type.
The high-type bidder knows that he is the only one, and can bid a very low bid that is still higher than the value of all the other bidders.
This remains true regardless of $n$ or the design of the auction.
To ensure that this high-type bidder feels a competitive pressure to increase his bid as the number of bidders increases, the expected number of other high-type bidders should not go to $0$ with $n$.
This is the case under the uniform full support assumption, since the expected number of type $k$ bidders is at least $n\d$.

Note that the condition on $F_n$ can be slightly weaker (but also less elegant).
A variant of this assumption can be found in Appendix \ref{app:otherassumptionFn}.
This concludes the description of the model at the bidders' part.

As for the seller, she utilizes her role as a mechanism designer that chooses an information structure to maximize her expected payoff in the resulting auctions.
We assume that the bidders adopt equilibrium strategies in the auction based on the information structure and neither use weakly dominated strategies nor bid their valuation or above.\footnote{For bidder $i$, bidding $v^i$ or higher in the first round is weakly dominated. 
Bidding $v^i$ or higher in the second round is also weakly dominated, unless player $i$ knows that some other player $j$ made a first-round  bid $b_j^1 \geq v_i$.
In that case, there no risk for bidder $i$ in bidding $b_1^2= v_i$ (if this is an allowable bid) or even $b_i^2 > v_i$ as long as $b_i^2 < b_j^1$.
Allowing for such behaviors would however lead to a rather spurious multiplicity of equilibria.}
Hence, she wishes each bidder to bid his highest possible bid, i.e. his corresponding $\beta^i$ in both rounds.
Such a strategy will be referred to as the \emph{revenue-maximizing strategy} and denoted by $\sigma^*$.
The strategy profile in which all bidders use the revenue maximizing strategies is called the \emph{revenue-maximizing profile} and denoted by $\Sigma^*$.
%Denote by $\sigma^*$ the strategy in which a bidder with valuation $v^j$ bids in both stages $\beta^j$ and by $\Sigma^*$ the strategy profile in which all bidders use the revenue-maximizing strategies. 
The main questions that we explore in this paper is when $\Sigma^*$ is indeed an equilibrium and when it is the unique one (minding the restriction to non-weakly dominated strategies that never bid above the private value). %\Red{We need to add here or before a restriction on the kind of equilibria we consider}.\footnote{\Red{UNLESS WE ALREADY SAY IT, we may want to add in a note something like: If bidder $i$ proceeds to the second stage and, given the first stage bids, believes that his opponent $j$ will play $b_j^2 > v_i$, then bidder $i$ could bid $v_i$ or even higher in the second stage (even though this is a weakly dominated strategy if this bid is also no lower than $b_j^1$). This may happen in equilibrium. These equilibria are excluded from our analysis. Note that in equilibrium, bidder $j$ would not bid higher than $\beta_j$, so the revenue for the seller would not be higher than in the revenue-maximizing profile.}}

%stage bid and due to the first stage bid, 

%%%%%%%%%%%%%%%%%%%%%%%%%%%%
\section{Equivalences in Auctions Without Information}\label{two_lemmas}
We start our analysis with two useful lemmas regarding non-informative information structures, which significantly simplify their analysis.
First, we show that the signals conveyed to bidder $i$ in a non-informative information structure cannot depend on the bids of the others.
In Definition \ref{def:non_info} we only prohibited them from depending on the distribution of the second stage bid of the opponent.
In theory this still allowed them to depend on low (``losing'') bids.
We show that is not the case.

A direct result of this independence is Lemma \ref{thmNoInfoIsFPSA}, which states that in a non-informative information structure, a strategy that changes the bids on the second stage is weakly dominated by one that does not.
Hence, when considering only equilibria in non-weakly dominated strategies, it is possible to assume that bidders do not change their bids in the second stage, which  makes this stage redundant.
This significantly simplifies the analysis, as it effectively turns this auction into a single-stage first-price auction.

\begin{lemma}\label{lem:non_info_iff_iid}
An information structure $\Theta$ is non-informative iff for every bidder $i$ with private value $v_i$ that bids $b_i^1$ in the first stage, the distribution of the $i$th element of $\Theta$, given $b_i^1$ and conditioned on bidder $i$ reaching the second stage, is independent of all the other bids.
\end{lemma}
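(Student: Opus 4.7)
The plan is to prove the two directions of the iff separately, using the characterization of non-informativeness from Definition \ref{def:non_info}.

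\textbf{Forward direction (non-informative implies marginal independence).} I will argue by contrapositive. Suppose that for some bidder $i$, some first-stage bid $b_i^1$ of $i$, and two realizations $b_{-i}^1, b_{-i}^{1\prime}$ of the other first-stage bids (each making $i$ a finalist), one has $\Pr(\theta_i \mid b_i^1, b_{-i}^1, \text{stage 2}) \neq \Pr(\theta_i \mid b_i^1, b_{-i}^{1\prime}, \text{stage 2})$. I will then exhibit a strategy profile $\sigma_{-i}$ under which $\theta_i$ changes the conditional distribution of $B_i^{opp}$. A natural choice is to let each bidder $k \neq i$ mix in the first stage so that his realized bid lies in $\{b_k^1, b_k^{1\prime}\}$ with positive probability on each, and to let every potential finalist $j$ keep his bid constant in the second stage, i.e.\ $\sigma_j^2(v_j,b_j^1,\theta_j)=b_j^1$. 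Under this $\sigma_{-i}$ one has $B_i^{opp} = b_j^1$, so the hypothesized dependence of the conditional law of $b_{-i}^1$ on $\theta_i$ translates directly into a dependence of the conditional law of $B_i^{opp}$ on $\theta_i$, contradicting non-informativeness.

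\textbf{Reverse direction (marginal independence implies non-informative).} Fix an arbitrary $\sigma_{-i}$, a signal $\theta_i$, and a bid $b\in B$. The probability $\Pr(B_i^{opp}=b \mid v_i,b_i^1,\sigma_{-i},\theta_i)$ decomposes as a sum over the identity of the other finalist $j$, his type $v_j$, his first-stage bid $b_j^1$, and his signal $\theta_j$, weighted by the probability that $\sigma_j^2$ outputs $b$. I will show that each weight is unchanged when we drop the conditioning on $\theta_i$. The hypothesis applied to bidder $i$ yields that the conditional joint distribution of $(v_{-i},b_{-i}^1)$ given $(v_i,b_i^1,\theta_i,\text{stage 2})$ agrees with that given $(v_i,b_i^1,\text{stage 2})$, and the hypothesis applied to the potential finalist $j$ yields that the conditional law of $\theta_j$ given $b_j^1$ (and the event that $j$ reaches stage 2) is free of $\theta_i$. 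Summing, each term is invariant under dropping the conditioning on $\theta_i$, so the two conditional laws of $B_i^{opp}$ coincide.

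The main obstacle is handling the event ``$i$ reaches the second stage'', which is itself a nontrivial event in $b_{-i}^1$ and thus already entangles $\theta_i$ with the other first-stage bids through the conditioning. In the forward direction, the strategy profile $\sigma_{-i}$ must be constructed so that this selection effect does not neutralize the marginal discrepancy exploited to derive the contradiction; in the reverse direction, the hypothesis must be invoked simultaneously for $i$ and for the potential finalist $j$ to preclude back-channel correlations between $\theta_i$ and $\theta_j$ that $\sigma_j^2$ could otherwise read off and leak into $B_i^{opp}$.
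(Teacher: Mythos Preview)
Your forward direction has a genuine gap. When you let each bidder $k\neq i$ independently mix between $b_k^1$ and $b_k^{1\prime}$, the realized profile of others' bids ranges over all $2^{n-1}$ combinations, not just the two distinguished profiles. Under the ``keep the first-stage bid'' second-stage strategy, $B_i^{opp}$ is an order statistic of the others' bids, a many-to-one function of the profile; nothing prevents $b_{-i}^1$ and $b_{-i}^{1\prime}$ from yielding the \emph{same} value of $B_i^{opp}$ (e.g.\ $n=3$, profiles $(5,3)$ and $(3,5)$), and nothing controls the signal law on the $2^{n-1}-2$ intermediate profiles. So the assertion that the dependence of $b_{-i}^1$ on $\theta_i$ ``translates directly'' into a dependence of $B_i^{opp}$ on $\theta_i$ is unjustified. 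The paper closes this gap by also exploiting the freedom in $F_n$, which Definition~\ref{def:non_info} quantifies over but which you never touch: it first reduces to the case where one of the two profiles is the all-zero vector $\ubar{0}_{-i}$ (if two profiles have different signal laws, at least one differs from $\ubar{0}_{-i}$), and then takes a perfectly correlated $F_n$ (with probability $\tfrac{1}{2}$ all others have type $v^1$ and bid $0$; with probability $\tfrac{1}{2}$ all have type $v^K$ and bid their entry in $\ubar{b}_{-i}$). This forces exactly two bid profiles to occur, with distinct $B_i^{opp}$ values $0$ and $\tilde b=\max\ubar{b}_{-i}>0$, so the signal discrepancy becomes visible through $B_i^{opp}$ and a likelihood-ratio comparison finishes the job.

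A secondary remark on your reverse direction: the step ``the hypothesis applied to the potential finalist $j$ yields that the conditional law of $\theta_j$ given $b_j^1$ \ldots\ is free of $\theta_i$'' does not follow from the stated hypothesis. Applied to $j$, the hypothesis gives $\theta_j$ independent of the other \emph{bids} $b_{-j}^1$, not of the other \emph{signal} $\theta_i$; since $\Theta$ takes values in $\Delta(\Theta_1\times\cdots\times\Theta_n)$, the components $\theta_i$ and $\theta_j$ can be correlated even when each is marginally independent of the bids. The paper dispatches this direction in one sentence and does not confront this point either, so you are not in worse shape than the paper here, but the explicit argument you wrote does not go through as stated.
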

\begin{proof}
The first direction of the lemma is trivial.
If the information is independent of the first-stage bids, conditioning on it will not affect the second-stage bids of the opponent.
The rest of the proof deals with the second direction.

\smallskip

Let $\Theta $ be a non-informative information structure and assume by contradiction that there exists a bidder $i$ with private value $v_i$ that the signals he receives when reaching the second stage depend on the first-round bids of the others.
This means that there exists a first stage bid $b_i^1$ and at least two vectors of bids of the others such that in both cases:
\begin{enumerate*}
\item the probability bidder $i$ reaches the second stage (the event $\SS$) is positive;
\item the distribution of signals received by bidder $i$ is different.
\end{enumerate*}
Without loss of generality, assume the lowest bid is $b^1=0$ and denote by $\ubar{0}_{-i}$ the vector of bids where everyone bid $0$.
Clearly $\Pr(\SS\vert b_i^1,\ubar{0}_{-i})>0$.
Let $\ubar{b}_{-i}$ another vector of bids such that $\Pr(\SS\vert b_i^1,\ubar{b}_{-i})>0$ and the distribution of signals bidder $i$ receives when everyone bids according to $\ubar{b}_{-i}$ is different than when everyone bids according to $\ubar{0}_{-i}$.
Thus, there exists a signal $\theta_i\in\Theta_i$ such that $\Pr(\theta_i\vert \ubar{0}_{-i},b_i^1)>\Pr(\theta_i\vert \ubar{b}_{-i},b_i^1)$.

Consider the following joint-distribution $F_n$ (conditional on bidder $i$ having value $v_i$):
\begin{itemize}
\item With probability $\tfrac{1}{2}$ all other bidders have the private value $v^1$.
\item With probability $\tfrac{1}{2}$ all other bidders have the private value $v^K$.
\end{itemize}
As for the strategy, each of the bidders $j\neq i$ bids in the second stage the same bid he bid in the first stage.
In the first stage, bidder $j$ with private value $v_j$ bids $0$ if $v_j=v^1$ and bids his corresponding bid in $\ubar{b}_{-i}$ if $v_j=v^K$.

Denote the maximal bid in $\ubar{b}_{-i}$ by $\tilde{b}$.
To achieve the contradiction, we calculate the probabilities of the events $B_i^{opp}=0$ and $B_i^{opp}=\tilde{b}$ conditioned and unconditioned on $\theta_i$ and show that conditioning on the signal $\theta_i$ changes the distribution of $B_i^{opp}$.

When bidder $i$ bids $b_i^1$, the probabilities of the two possible second stage bids conditioned on the signal $\theta_i$ (the conditioning on $F_n$, $b_i^1$, and the strategy of the others is implicit) are
\[\Pr(B_i^{opp}=0\vert \theta_i, \SS)=
\tfrac{\Pr(\ubar{0}_{-i}, \theta_i, \SS)}{\Pr(\theta_i,\SS)}=
\tfrac{\Pr(\ubar{0}_{-i})\Pr(\SS\vert\ubar{0}_{-i})\Pr(\theta_i\vert \ubar{0}_{-i})}{\Pr(\theta_i,\SS)},\]
and
\[\Pr(B_i^{opp}=\tilde{b}\vert \theta_i, \SS)=
\tfrac{\Pr(\ubar{b}_{-i}, \theta_i, \SS)}{\Pr(\theta_i,\SS)}=
\tfrac{\Pr(\ubar{b}_{-i})\Pr(\SS\vert\ubar{b}_{-i})\Pr(\theta_i\vert \ubar{b}_{-i})}{\Pr(\theta_i,\SS)}.\]

Without receiving the signal $\theta_i$, the probability of seeing each of the possible bids in the second stage are
\[\Pr(B_i^{opp}=0\vert \SS)=
\tfrac{\Pr(\ubar{0}_{-i}, \SS)}{\Pr(\SS)}=
\tfrac{\Pr(\ubar{0}_{-i})\Pr(\SS\vert\ubar{0}_{-i})}{\Pr(\SS)},\]
and
\[\Pr(B_i^{opp}=\tilde{b}\vert \SS)=
\tfrac{\Pr(\ubar{b}_{-i}, \SS)}{\Pr(\SS)}=
\tfrac{\Pr(\ubar{b}_{-i})\Pr(\SS\vert\ubar{b}_{-i})}{\Pr(\SS)}.\]
By writing these terms explicitly, it is easy to see that
\begin{equation}\label{eq:ratio_depends_on_theta}
\tfrac{\Pr(B_i^{opp} = 0 \vert \theta_i, \SS)}{\Pr(B_i^{opp} = 0 \vert \SS)}>\tfrac{\Pr(B_i^{opp} = \tilde{b} \vert \theta_i, \SS)}{\Pr(B_i^{opp}=\tilde{b} \vert \SS)}.
\end{equation}
This strict inequality implies that the distribution of $B_i^{opp}$ is not independent of $\theta_i$ (otherwise both ratios would be equal to $1$), which contradicts the fact that $\Theta$ is non-informative.\hfill\end{proof}

%Note that , and it remains if you perturb F_n, so even after perturbing F_n to make it satisfy the full support distribution, the distribution of $B_i^opp$ remains independent of $\theta_i$. 

%We assumed that $\Theta$ is non-informative, so $\Pr(B_i^{opp}=\tilde{b}\vert \SS)=\Pr(B_i^{opp}=\tilde{b}\vert \theta_i,\SS)$ and $\Pr(B_i^{opp}=0\vert \SS)=\Pr(B_i^{opp}=0\vert \theta_i,\SS)$ which implies that
%\[
%\tfrac{\Pr(\ubar{b}_{-i})\Pr(\SS\vert\ubar{b}_{-i})\Pr(\theta_i\vert \ubar{b}_{-i})}{\Pr(\theta_i,\SS)}\cdot
%\tfrac{\Pr(\ubar{0}_{-i})\Pr(\SS\vert\ubar{0}_{-i})}{\Pr(\SS)}
%=
%\tfrac{\Pr(\ubar{0}_{-i})\Pr(\SS\vert\ubar{0}_{-i})\Pr(\theta_i\vert \ubar{0}_{-i})}{\Pr(\theta_i,\SS)}\cdot
%\tfrac{\Pr(\ubar{b}_{-i})\Pr(\SS\vert\ubar{b}_{-i})}{\Pr(\SS)}.
%\]
%After canceling out equal terms (which are all non zero) and we are left with
%$
%\Pr(\theta_i\vert \ubar{b}_{-i})
%=
%\Pr(\theta_i\vert \ubar{0}_{-i}).
%$
%which is a contradiction to the assumption that $\theta_i$ is more likely to appear when everyone are bidding $\ubar{0}_{-i}$.
%\hfill\end{proof}

\begin{comment}
Note that even though $F_n$ in the proof has no full support over $V$, this does not contradict the full support assumption.
The reason is that the full support assumption regards the limit when $n\to\infty$, whereas for a particular $n$, it is possible that $\d(F_n)=0$.
Due to continuity, the inequality in Eq.~\eqref{eq:ratio_depends_on_theta} remains true even after perturbing $F_n$ to make it satisfy the full support distribution, thus the distribution of $B_i^{opp}$ remains dependent on $\theta_i$. 
\end{comment}

Using this lemma we can establish that the least informative information structures are the non-informative ones.
Indeed, suppose that $\Theta'$ is a non-informative information structure.
According to the lemma, the signals that bidder $i$ receives are independent of the bids of the others, so for each $\theta_i'\in \Theta_i$, the left-hand side of Eq.~\eqref{eq_more_info} is $\sum_{b_{-i}\in B^{n-1}}\Pr_{\sigma_{-i}}(b_{-i})\Pr_{\Theta'}(\theta_i'\vert b_{-i},b_i)=\Pr_{\Theta'}(\theta_i'\vert b_i)$.
For each information structure $\Theta$ we can define $L_i$ according to $L_i(\theta,b_i)=\Pr_{\Theta'}(\theta_i'\vert b_i)$ (independent of $\theta$) for which the right-hand side of Eq.\eqref{eq_more_info} reduces to $\Pr_{\Theta'}(\theta_i'\vert b_i)$, and the equation holds.

A direct result of this lemma is that in a non-informative information structure, the bidders do not need to wait for the information given by the seller if their second stage depends on it.
After choosing his first-stage bid, a bidder already knows the distribution of the signals he is going to obtain (as it is independent of the actions of the other bidders), so he can mimic this information structure himself and select his second stage bid before starting the auction.
By submitting this (higher) second stage bid already in the first stage (and not changing it afterwards), he increases his chances of being selected to the second stage without affecting the expected payoff if selected.
This logic, which makes the second stage redundant, is summarized in the following lemma.

\begin{lemma}\label{thmNoInfoIsFPSA}
If $\Theta$ is non-informative, then each strategy in which the second bid differs from the first is weakly dominated by a strategy in which the second bid equals the first.
\end{lemma}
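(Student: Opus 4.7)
The plan is to construct, for any strategy $\sigma_i=(\sigma_i^1,\sigma_i^2)$, a dominating strategy $\tilde\sigma_i$ whose first- and second-stage bids coincide. By Lemma~\ref{lem:non_info_iff_iid}, when $\Theta$ is non-informative the distribution of the signal $\theta_i$ that bidder $i$ would receive in stage 2, conditional on his first-stage bid $b_i^1$, depends only on $b_i^1$. Bidder $i$ can therefore simulate $\theta_i$ himself before placing any bid. I would define $\tilde\sigma_i$ as follows: for each $v_i$, draw $b_i^1\sim\sigma_i^1(v_i)$; draw $\theta_i$ from its marginal distribution given $b_i^1$; compute $\tilde b_i=\sigma_i^2(v_i,b_i^1,\theta_i)$, which by the model satisfies $\tilde b_i\geq b_i^1$; and bid $\tilde b_i$ in both stages.

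To establish weak dominance I would fix any $\sigma_{-i}$ and couple the two auctions pointwise: use the same valuations, the same realization of $\sigma_{-i}$, the same pair $(b_i^1,\theta_i)$, and a common random permutation $\pi$ of the bidders serving as the consistent tie-breaking rule. By Lemma~\ref{lem:non_info_iff_iid} applied to each $j\neq i$, the distribution of $\theta_j$ depends only on $b_j^1$ and not on bidder $i$'s bid, so the others' signals, and hence their stage-2 bids, can also be coupled to be identical under $\sigma_i$ and $\tilde\sigma_i$. The comparison then has two cases. If bidder $i$ does not reach stage 2 under $\sigma_i$ his payoff is $0$, while under $\tilde\sigma_i$ it is nonnegative (any winning payoff equals $v_i-\tilde b_i>0$ by the assumption that bids stay strictly below $v_i$). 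If he does reach stage 2 under $\sigma_i$, the central claim is that he also reaches stage 2 under $\tilde\sigma_i$ with the same opponent; combined with the fact that his own stage-2 bid is $\tilde b_i$ in both and the opponent's stage-2 bid is coupled to be the same, the payoffs match exactly.

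The main obstacle is verifying the ``same opponent'' claim in the presence of first-stage ties. Order the bidders by bid (descending) and then by $\pi$ (ascending), and let $A=\{j\neq i : j\text{ outranks }i\text{ under this order when bidder }i\text{ bids }b_i^1\}$, with $A'$ defined analogously for $\tilde b_i$. Since $\tilde b_i\geq b_i^1$ we have $A'\subseteq A$. Bidder $i$ reaches stage 2 iff at most one other outranks him, i.e.\ $|A|\leq 1$, which then implies $|A'|\leq 1$. In both subcases, the other member of top 2 is the highest-priority bidder other than $i$ (if $|A|=1$, necessarily the unique element of $A$; if $|A|=0$, the rank-2 bidder), and this highest-priority other is determined by $\{(b_j^1,\pi(j))\}_{j\neq i}$ alone, independently of bidder $i$'s own bid. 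A routine check of the subcases for $|A'|\in\{0,1\}$ shows the same assignment of the other top-2 slot under $\tilde\sigma_i$, so the opponent is identical, completing the argument.
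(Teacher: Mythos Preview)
Your approach is essentially the same as the paper's: both construct the dominating strategy by simulating the signal (whose distribution, by Lemma~\ref{lem:non_info_iff_iid}, depends only on $b_i^1$) and then bidding the would-be second-stage bid in both rounds. The paper establishes the inequality by decomposing the expected payoff into explicit sums and comparing term by term via $\Pr(\mathcal{S}\mid b_i^1,b_{-i})\leq \Pr(\mathcal{S}\mid b_i^2,b_{-i})$, while you argue pathwise via a coupling; your verification that the stage-2 opponent is unchanged (using a permutation to implement symmetric tie-breaking) is in fact more explicit than the paper's one-line assertion that ``he does not change the identity of his opponent.''

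One small gap: in the case where $i$ fails to reach stage~2 under $\sigma_i$ but does under $\tilde\sigma_i$, you invoke ``the assumption that bids stay strictly below $v_i$'' to conclude the payoff under $\tilde\sigma_i$ is nonnegative. The lemma, however, is stated for arbitrary strategies, and a strategy that sometimes bids at or above $v_i$ could make $v_i-\tilde b_i\leq 0$, breaking the inequality in that event. The paper handles this explicitly by first replacing any bid at or above $v_i$ with $\beta^i$ (the highest allowable bid below $v_i$) and observing that this modification already weakly improves the payoff; the main argument is then applied to the modified strategy. You should add the same preliminary reduction.
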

\begin{proof}
Fix $F_n$.
Consider bidder $i$ with a private value $v_i$.
Let $\sigma_i=(\sigma_i^1,\sigma_i^2)$ be some strategy of bidder $i$.
We prove that there exists a strategy $\hat{\sigma}_i=(\hat{\sigma}_i^1,\hat{\sigma}_i^2)$ s.t. $\forall b\in B, \hat{\sigma}_i^2(v_i,b,\cdot)=b$ which weakly dominates $\sigma_i$.

Without loss of generality, we can assume that $\sigma_i$ never bids above $v_i$.
Otherwise, we can consider a new strategy which bids $\beta^i$ whenever $\sigma_i$ instructs to bid higher than $v_i$, and otherwise bids the same as $\sigma_i$.
This new strategy replaces non-positive payoffs with non-negative ones, so its expected utility is as good as of $\sigma_i$.
By applying the rest of the proof to this new strategy, we increase the expected payoff even further.

If the rest of the bidders use the strategy $\sigma_{-i}$, the probability of bidding $b_i^2$ in the second stage is\footnote{As for convention, when conditioning on a zero-probability event, the whole product is zero.}
\begin{equation}\label{eq:totalprob_b_i^2}
\Pr_{\sigma_i}(b_i^2\vert v_i)=\sum\limits_{\substack{b_i^1\leq b_i^2 \\ \theta_i\in \Theta_i \\ b_{-i}\in B^{n-1}}} \Pr(\sigma_i^2(v_i,b_i^1,\theta_i)=b_i^2\vert \theta_i, b_i^1) \Pr(\theta_i\vert b_i^1, b_{-i})\Pr(b_{-i}\vert F_n, \sigma_{-i})\Pr(\sigma_i^1(v_i)=b_i^1).
\end{equation}
Conditioned on $b_i^1$ the information is independent of the bids of the others so $\Pr(\theta_i\vert b_i^1, b_{-i})=\Pr(\theta_i\vert b_i^1)$.
In addition, $\sum\limits_{b_{-i}\in B^{n-1}}\Pr(b_{-i}\vert F_n, \sigma_{-i})=1$ so the dependence on $\sigma_{-i}$ and $F_n$ vanishes from the above expression, and the probability of bidding $b_i^2$ in the second stage is uniquely determined by $v_i$.
We can therefore define a strategy $\hat{\sigma}_i^1(v_i)$ to be a a distribution over $B$ that chooses a bid according to the above probabilities and $\hat{\sigma}_i^2$ to be a strategy that chooses the first-stage bid with probability $1$.
It is left to show that  $\sigma_i$ is weakly dominated by  $\hat{\sigma}_i$.

The expected payoff when using the strategy profile $\sigma_i$ is
\begin{eqnarray}
&\sum\limits_{b_i^2\in B}\sum\limits_{b_i^1\leq b_i^2} \sum\limits_{b_{-i}\in B^{n-1}} \sum\limits_{\theta_i\in \Theta_i} 
\left[(v_i-b_i^2)  \Pr(\mathcal{W}\vert b_i^2,\sigma_{-i},b_{-i}) \Pr(\sigma_i^2(v_i,b_i^1,\theta_i)=b_i^2\vert \theta_i, b_i^1)\Pr(\theta_i\vert b_i^1)\cdot \right. \nonumber \\
&\left. \cdot \Pr(\mathcal{S}\vert b_i^1,b_{-i})
\Pr(\sigma_i^1(v_i)=b_i^1)\Pr(\sigma_{-i}=b_{-i})\right] \nonumber
\end{eqnarray}
where $\mathcal{S}$ is the event bidder $i$ reaches the second stage and $\mathcal{W}$ is the event bidder $i$ wins in the second stage, and as proven above, the distribution of the signals is independent of the bids of the others.
The order of summation can be changed in the following manner:
\begin{eqnarray}
&\sum\limits_{b_i^2\in B} \sum\limits_{b_{-i}\in B^{n-1}} 
(v_i-b_i^2)  \Pr(\mathcal{W}\vert b_i^2,\sigma_{-i},b_{-i})\Pr(\sigma_{-i}=b_{-i})\cdot  \nonumber \\
&\cdot\left[\sum\limits_{b_i^1\leq b_i^2} \sum\limits_{\theta_i\in \Theta_i} \Pr(\mathcal{S}\vert b_i^1,b_{-i})
\Pr(\sigma_i^2(v_i,b_i^1,\theta_i)=b_i^2\vert \theta_i, b_i^1)\Pr(\theta_i\vert b_i^1)\Pr(\sigma_i^1(v_i)=b_i^1)\right] \nonumber
\end{eqnarray}
For each $b_{-i}$, the opponent of bidder $i$ in the second round is determined (maybe with some probability in case of a tie break) and along with $\sigma_{-i}$ also his second stage bid.
If bidder $i$ bids $b_i^2$ in the first round instead of $b_i^1$, he does not change the identity of his opponent and hence none of the terms in the outer summation changes.
On the other hand, $\Pr(\mathcal{S}\vert b_i^1,b_{-i})\leq \Pr(\mathcal{S}\vert b_i^2,b_{-i})$ so the inner summation in Eq.\eqref{eq:totalprob_b_i^2} is smaller than
\begin{equation}
\sum\limits_{\substack{ b_i^1\leq b_i^2 \\ \theta_i\in \Theta_i}} \Pr(\mathcal{S}\vert b_i^2,b_{-i})
\Pr(\sigma_i^2(v_i,b_i^1,\theta_i)=b_i^2\vert \theta_i, b_i^1)\Pr(\theta_i\vert b_i^1)\Pr(\sigma_i^1(v_i)=b_i^1) = \Pr(\mathcal{S}\vert b_i^2,b_{-i})\Pr(\hat{\sigma}_i^1(v_i)=b_i^2). \nonumber
\end{equation} 
Therefore, the expected payoff is smaller than
\begin{equation}
\sum\limits_{b_i^2\in B} \sum\limits_{b_{-i}\in B^{n-1}} 
(v_i-b_i^2)  \Pr(\mathcal{W}\vert b_i^2,\sigma_{-i},b_{-i})\Pr(\sigma_{-i}=b_{-i})\Pr(\hat{\sigma}_i^1(v_i)=b_i^2)  \nonumber
\end{equation}
which is exactly the expected payoff when using $(\hat{\sigma}_i^1,\hat{\sigma}_i^2)$ and the proof is complete.\hfill\end{proof}

%\YT{Old proof}

%Indeed, when bidder $i$ bids $b_i^2$ in the second stage his payoff is $v_i-b_i^2$.
%In addition, conditioned on moving to the second stage, the probability of bidding $b_i^2$ is identical under both strategy profiles.
%However, the distribution of the first stage bid according to $\hat{\sigma}_i^1$ first-order stochasticly-dominates the first stage bid according to $\sigma_i^1$, so the probability of being selected to the second stage when using the former is larger than when using the later.
%This results in higher expected utility and the proof is complete.\hfill\end{proof} 

A direct result of this is that in a non-informative information structure, all strategies that change the bid in the second stage are weakly dominated by strategies that do not.
If we limit our discussion to only non dominated strategies, all bidders use the same bid in the first and second stage, hence only the first stage matters.
As a result, a non-informative information structure can be analyzed as a first-price single-stage auction.
This simplifies the analysis in the examples to follow.
%%%%%%%%%%%%%%%%%%%%%%%%%

\section{Examples}\label{example}

We present two simple examples to illustrate the model and provide key insights into our results.
Specifically, the examples demonstrate how the revenue-maximizing information structure depends on the number of bidders.

The first example (Section \ref{sect:smallnumberofbidders}) demonstrates that when different information structures are compared, the results vary with the number of bidders. 
We show that with a small number of bidders, the optimal information structure for the auctioneer depends on the exact distribution of valuations.
In particular, an informative information structure can be better. 
However, with a larger number of bidders a non-informative information structure becomes optimal. 
The next question we address is how large the number of bidders needs to be to ensure that the non-informative structure is optimal.
The second example (Section \ref{Sect:largeN}) suggests that this number is not necessarily very large.
Moreover, we show that for a large enough number of bidders, both information structures are equivalent: under both structures, the only equilibrium is the revenue-maximizing strategy profile.
All the results are formally presented and proved in Section \ref{results}.

\subsection{Small Number of Bidders} \label{sect:smallnumberofbidders}
%\section{Small Number of Bidders}
%The main result regards the case of a large number of bidders and finds the model where no information is disclosed to be more efficient.
%The following example demonstrates that no such conclusion can be drawn for cases where the number of bidders is small.
%In fact, for three bidders, the question of the optimal information structure depends on the exact distribution of valuations.

%The idea: there is a small probability for high value. If you know nothing, bid medium, more than the low ones. However, if you have information and you see another medium, you bid high in equilibrium.

Suppose there are $n=3$ bidders, two valuations $\{v^L=0.01,v^H=1\}$, and four possible bids $B=\{0,0.33,0.66,0.99\}$.
The probability of each bidder having a high valuation is $p=0.01$, independent of the other bidders.
Clearly, a bidder with a low valuation bids $b^1=0$ in all equilibria.

\begin{figure}[ht]
\centering
\begin{tikzpicture}[x=1.5cm]
\draw[black,-,thick,>=latex]
  (0,0) node[below left] {$b^1=0$}  -- (10,0) node[below right] {$v^H=1$};
\foreach \Xc in {0,1,3,6,9,10}
{
  \draw[black,thick]
    (\Xc,0) -- ++(0,7pt) ;
}

  %\node[below,align=left,anchor=north,inner xsep=0pt]
  %at (0,0)
  %{$b^1=0$};
  
  \node[below,align=left,anchor=north,inner xsep=0pt]
  at (1,0)
  {$v^L=0.01$};

  \node[below,align=left,anchor=north,inner xsep=0pt]
  at (3,0)
  {$b^2=0.33$};

  \node[below,align=left,anchor=north,inner xsep=0pt]
  at (6,0)
  {$b^3=0.66$};

  \node[below,align=left,anchor=north,inner xsep=0pt]
  at (9,0)
  {$b^4=0.99$};

\end{tikzpicture}
\end{figure}

We compare the two extreme information structures:
a non-informative information structure and a fully informative  information structure.
Claim~\ref{clm:noinfo} shows that the non-informative structure is not revenue-maximizing: there is no symmetric equilibrium where a bidder with a high valuation bids higher than $b^2=0.33$ with a positive probability.
The unique symmetric equilibrium, in this case, is the one where a high valuation bidder bids $b^2=0.33$ in both stages.

\begin{claim}\label{clm:noinfo}
In a non-informative information structure, the only symmetric equilibrium in non weakly dominated strategies is that where a high valuation bidder submits the bid $b^2$ in both stages.
\end{claim}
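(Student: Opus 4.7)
The plan is to apply Lemma \ref{thmNoInfoIsFPSA} to reduce the two-stage auction under a non-informative structure to a single-stage first-price auction, since in non-weakly-dominated strategies every bidder repeats his first-stage bid. Low-valuation bidders then bid $b^1 = 0$ in any symmetric equilibrium, because any strictly positive bid exceeds $v^L = 0.01$ and is weakly dominated by $0$. Bids at or above $v^H$ are weakly dominated as well, so what remains is to characterize the symmetric equilibrium strategies of high-valuation bidders as (possibly mixed) distributions $(q_0, q_2, q_3, q_4)$ over $\{0, b^2, b^3, b^4\}$. Writing $P_0 = 0.99 + 0.01 q_0$ and $P_{b^j} = 0.01 q_j$ for the probability an arbitrary opponent submits each bid, note that $P_0 \geq 0.99$ in every symmetric profile.

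The main pruning step is to eliminate $b^3$ and $b^4$ from the support by a direct dominance argument. Bidding $b^2 = 0.33$ wins at least whenever both opponents bid $0$, yielding payoff at least $(v^H - b^2) P_0^2 \geq 0.67 \cdot 0.99^2 > 0.65$. By contrast, bidding $b^3$ yields at most $v^H - b^3 = 0.34$ and bidding $b^4$ yields at most $v^H - b^4 = 0.01$, regardless of the opponents' strategies. Hence $b^3$ and $b^4$ are strictly dominated by $b^2$ and cannot lie in the support of a symmetric equilibrium, so the support reduces to a subset of $\{0, b^2\}$.

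Next I would rule out $0$ from the support. The pure profile at $0$ is not an equilibrium because a deviation to $b^2$ wins with probability $1$ for payoff $0.67$, strictly greater than the $1/3$ earned from a three-way tie at $0$. For the mixed case with $q_0 \in (0,1)$, indifference between $0$ and $b^2$ requires
\[
(v^H - b^2)\!\left(P_0^2 + P_0 P_2 + \tfrac{1}{3} P_2^2\right) = v^H \cdot \tfrac{1}{3} P_0^2,
\]
where the left side partitions the winning event at $b^2$ by the number of opponents also bidding $b^2$ (with the corresponding tie-breaking factor), and the right side is the probability of a three-way tie at $0$. Rearranging yields a linear combination of $P_0^2$, $P_0 P_2$, $P_2^2$ with strictly positive coefficients (since $0.67 - 1/3 > 0$), equal to zero — impossible because $P_0 \geq 0.99$.

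The only surviving candidate is the pure profile in which every high type bids $b^2$ in both stages, and I would close the argument by verifying it is indeed an equilibrium: the equilibrium payoff, computed via the same tie-corrected win probability with $P_2 = 0.01$, strictly exceeds the payoff from any of the deviations considered above. I expect the main obstacle to be the bookkeeping of the tie-breaking terms in the indifference equation, but the numerical gaps between $(v^H - b^j)$ for the various $j$ are wide enough that each comparison is robust and the sign analysis is transparent.
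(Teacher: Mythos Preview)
Your proposal is correct and follows essentially the same approach as the paper: reduce to a one-shot first-price auction via Lemma~\ref{thmNoInfoIsFPSA}, pin low types at $0$ by weak dominance, and then compare the high type's payoff from $b^2$ (at least $0.67\cdot 0.99^2$) against the upper bounds for the other bids. The only difference is that the paper disposes of the bid $0$ with the same one-line bound (payoff $\le 1/3$), whereas you take a longer detour through an indifference equation; your argument is sound but unnecessary, since $1/3 < 0.67\cdot 0.99^2$ already rules $0$ out of the support without any case split on $q_0$.
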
 
\begin{proof} 
Fix an equilibrium. %For a bidder with a low valuation, bidding $0$ in both stages is weakly dominant. Moreover, if, when all bidders have a low valuation, some of them bid higher than $0$ in the first stage, then the one winning the auction pays more than his valuation: thus, for him, bidding always $0$ when having a low valuation would be a profitable deviation. Therefore, when all bidders have a low valuation, they all bid $0$ in the first stage. Since valuations are independent, this implies that when a bidder has a low valuation (whatever the valuations of the others), he bids $0$ in the first stage. Finally, since the information structure is non-informative, a similar argument shows that in the second stage, bidders with a low valuation must bid $0$ too.} 
As mentioned above, it is easy to see that in this equilibrium, all low type bidders must bid $0$ in both stages.
Now consider a bidder with a high valuation.
When bidding $0.99$ at any stage, his profit cannot exceed $0.01$.
When bidding $0.66$ at any stage, his profit cannot exceed $0.34$.
When bidding $0$ in both stages, his profit cannot exceed $\tfrac{1}{3}$.
However, when bidding $0.33$ in both stages, both opponents have low valuations with probability $0.99^2$, so the profit is at least $0.99^2\cdot 0.67$, which is much higher. 
Finally, when bidding $0$ in the first stage and $0.33$ in the second stage, his profit is lower than when bidding $0.33$ in both stages (same profit if selected, but lower probability of being selected, as discussed in Lemma~\ref{thmNoInfoIsFPSA}).
Therefore, a bidder with a high valuation must bid $b^2= 0.33$ in both stages.
Conversely, it is easily seen that this results is an equilibrium.
\hfill\end{proof}

In the fully informative case, all equilibria are such that a high-type bidder never bids $b_1$ (because this would imply a gain less than $\tfrac{1}{3}$ in that case), and bids $0.66$ or more with positive probability in the second round (if the second round bid was always $b^2$ for all high type bidders, then deviating and bidding $b^3$ in the second round when the other bidder has the valuation $v^H$ is a profitable deviation).
Hence, any equilibrium of the fully informative model (there are several) is better for the seller than the best equilibrium without information.
Note also that the best equilibrium for the seller in the full information case is the one where, when there are at least two bidders with high type, they both bid $b^4$ in the second round.   

In this example, revealing information benefits the seller.
If there are only low valuation bidders, the information structures produce the same profit. 
However, given there is at least one high-type bidder, the profit of the seller in equilibrium without information is $0.33$, while the profit of the seller in the fully informative structure is higher.
For example, if in equilibrium a high-typed bidder bids $0.33$ and increases his bid to $0.66$ only if there is another high-typed bidder, the payoff to the seller would be $0.33\cdot 0.99^2 + 0.66 (1-0.99^2)>0.33$. 

Note that changes to the joint distribution of the values affect this observation. 
For example, if the dependency is such that a bidder with a high valuation induces a very high probability of an opponent having a high valuation as well, the information becomes irrelevant to determine the types of the others and can only serve as a coordination device.

Direct computation reveals that the reasoning in the fully-informative model remains true as long as $n\leq 184$ (for $n>184$ the two information structures are revenue-equivalent), while Claim \ref{clm:noinfo} remains true only as long as $n\leq 155$.
When the number of bidders is between $155$ and $184$, the only equilibrium without information is to bid $b^3$ in both stages, while the equilibrium in the fully-informed model is to bid $b^2$ in the first stage and to raise the bid only if necessary.
Hence, for the auctioneer, not releasing information becomes more profitable  for a large number of bidders.\footnote{For a much larger number of bidders ($\sim 600$), the only equilibrium is the revenue-maximizing equilibrium, as Theorem \ref{thmEffUniqueWithLargeN} suggests.}
An auction with more than $100$ bidders is atypical.
However, the next example demonstrates that revealing no information may prove advantageous even with a much smaller number of bidders, depending on the other auction parameters.

\subsection{Large Number of Bidders}\label{Sect:largeN}
Suppose that there are $n>3$ bidders, two private values, $V=\{v^L=0.3, v^H=1\}$, and five possible equally-spaced bids $B=\{0,0.2,\ldots,0.8\}$.
Each bidder has a probability $p$ of having the private value $v^H$, independent of the other bidders. 
In $\Sigma^*$, a bidder with private value $v^L$ bids $0.2$  and a bidder with private value $v^H$ bids $0.8$.
If everyone is bidding according to $\Sigma^*$, there is no profitable deviation for a bidder with private value $v^L$ regardless of the information structure, so we consider only a high-type bidder.

\begin{figure}[ht]
\centering
\begin{tikzpicture}[x=1.5cm]
\draw[black,-,thick,>=latex]
  (0,0) -- (10,0) node[below right] {$v^H$};
\foreach \Xc in {0,2,3,4,6,8,10}
{
  \draw[black,thick]
    (\Xc,0) -- ++(0,7pt) ;
}

  \node[below,align=left,anchor=north,inner xsep=0pt]
  at (0,0)
  {$0$};

  \node[below,align=left,anchor=north,inner xsep=0pt]
  at (2,0)
  {$0.2$};
  
  \node[below,align=left,anchor=north,inner xsep=0pt]
  at (3,0)
  {$v^L$};

  \node[below,align=left,anchor=north,inner xsep=0pt]
  at (4,0)
  {$0.4$};

  \node[below,align=left,anchor=north,inner xsep=0pt]
  at (6,0)
  {$0.6$};

  \node[below,align=left,anchor=north,inner xsep=0pt]
  at (8,0)
  {$0.8$};

\end{tikzpicture}
\end{figure}

In the non-informative information structure, $\Sigma^*$ is an equilibrium iff a high-type bidder has no profitable deviation.
Of all possible deviations, the most profitable is where he bids the lowest bid that is above $v^L$ in both stages.% (there is no reason to update the second stage bid as no new information is revealed after the first stage, see Lemma \ref{thmNoInfoIsFPSA}).
The expected profit from deviating is the gain $(1-0.4)$ times the probability that there are no high type bidder who outbid him.
The expected equilibrium payoff is the gain from winning the auction $(1-0.8)$ times the probability of being chosen among all high type bidders (who also bid $0.8$) which can be found as we later see using Eq.~\eqref{eq expectation}.
The condition for $\Sigma^*$ to be an equilibrium is therefore
\begin{equation}
(1-0.8)\tfrac{1-(1-p)^n}{np} \geq (1-0.4)(1-p)^{n-1}. \nonumber
\end{equation}
For each $p$, there exists a minimal $n$ for which this inequality holds, denoted by $N^{NI}$, and it remains true for every $n>N^{NI}$.
This $N^{NI}$ is the minimum number of bidders required for $\Sigma^*$ to be an equilibrium.

Naturally, $N^{NI}$ decreases with $p$.
When $p$ is small, a high-type bidder can bid the lower bid because there is a high probability that he is the only high-type bidder and no one will outbid him.
When $p$ is large, there is a high probability that other high-type bidders exist and bid $0.8$, and a lower bid is bound to lose.

A similar computation can be done for any information structure.
For example, in the fully-informative structure, the most profitable deviation from $\Sigma^*$ for a high-type bidder is to bid $0.4$ in the first stage and raise his bid in the BAFO stage only if the other bidder outbid him.
The condition for $\Sigma^*$ to be an equilibrium is
\begin{equation}
(1-0.8)\tfrac{1-(1-p)^n}{np} \geq (1-0.4)(1-p)^{n-1}+\tfrac{1-0.8}{2}(n-1)p(1-p)^{n-2}. \nonumber
\end{equation}
Observe that to satisfy this inequality a larger $n$ is needed compared to the former one.
Again, this inequality can be solved for each $p$ to obtain $N^{I}$, the minimum number of bidders required for $\Sigma^*$ to be an equilibrium in this information structure.

The general result is proved in Theorem \ref{thmEffUniqueWithLargeN} -- for $n$ large enough, $\Sigma^*$ is the unique equilibrium  regardless of the information structure.
The values of $N^{NI}$ and $N^{I}$ as a function of $p$ are summarized in Figure \ref{fig:minimalNumofBidders}.
As Corollary \ref{thm:withoutinfo less bidders} suggests, $N^{I}$ is always larger than $N^{NI}$. 
When the probability of being a high-type bidder is not extremely small, the required number of bidders is around $10$ for both types of information structures, and when the probability of being a high-type bidder is around $0.5$, it is around $5$.

\begin{figure}
\begin{center}
\begin{tikzpicture}

\begin{axis}[axis x line=middle, axis y line=middle, xtick={0.1,0.2,...,0.5},ytick={3,6,...,30}, xmin=0.05, xmax=0.6, ymin=3, ymax=30,xlabel = {Probability of having type $v^H$}, x label style={at={(axis description cs:0.5,-0.1)},anchor=north}]
    %Payoffs:
%WITHOUT INFORMATION
    \addplot[] coordinates {
(0.05,38) (0.06,38) (0.06,32) (0.07,32) (0.07,28) (0.08,28) (0.08,24) (0.09,24) (0.09,21) (0.10,21) (0.10,19) (0.11,19) (0.11,18) (0.12,18) (0.12,16) (0.13,16) (0.13,15) (0.14,15) (0.14,14) (0.15,14) (0.15,13) (0.16,13) (0.16,12) (0.17,12) (0.17,11) (0.18,11) (0.18,11) (0.19,11) (0.19,10) (0.20,10) (0.20,10) (0.21,10) (0.21,9) (0.22,9) (0.22,9) (0.23,9) (0.23,9) (0.24,9) (0.24,8) (0.25,8) (0.25,8) (0.26,8) (0.26,8) (0.27,8) (0.27,7) (0.28,7) (0.28,7) (0.29,7) (0.29,7) (0.30,7) (0.30,7) (0.31,7) (0.31,6) (0.32,6) (0.32,6) (0.33,6) (0.33,6) (0.34,6) (0.34,6) (0.35,6) (0.35,6) (0.36,6) (0.36,6) (0.37,6) (0.37,5) (0.38,5) (0.38,5) (0.39,5) (0.39,5) (0.40,5) (0.40,5) (0.41,5) (0.41,5) (0.42,5) (0.42,5) (0.43,5) (0.43,5) (0.44,5) (0.44,5) (0.45,5) (0.45,4) (0.46,4) (0.46,4) (0.47,4) (0.47,4) (0.48,4) (0.48,4) (0.49,4) (0.49,4) (0.50,4) (0.50,4) 
};
    \node at (axis cs: 0.1,12) {$N^{NI}$};

%WITH INFORMATION
    \addplot[] coordinates {
(0.06,40) (0.07,40) (0.07,35) (0.08,35) (0.08,30) (0.09,30) (0.09,27) (0.10,27) (0.10,24) (0.11,24) (0.11,22) (0.12,22) (0.12,20) (0.13,20) (0.13,19) (0.14,19) (0.14,17) (0.15,17) (0.15,16) (0.16,16) (0.16,15) (0.17,15) (0.17,14) (0.18,14) (0.18,14) (0.19,14) (0.19,13) (0.20,13) (0.20,12) (0.21,12) (0.21,12) (0.22,12) (0.22,11) (0.23,11) (0.23,11) (0.24,11) (0.24,10) (0.25,10) (0.25,10) (0.26,10) (0.26,9) (0.27,9) (0.27,9) (0.28,9) (0.28,9) (0.29,9) (0.29,8) (0.30,8) (0.30,8) (0.31,8) (0.31,8) (0.32,8) (0.32,8) (0.33,8) (0.33,7) (0.34,7) (0.34,7) (0.35,7) (0.35,7) (0.36,7) (0.36,7) (0.37,7) (0.37,7) (0.38,7) (0.38,7) (0.39,7) (0.39,6) (0.40,6) (0.40,6) (0.41,6) (0.41,6) (0.42,6) (0.42,6) (0.43,6) (0.43,6) (0.44,6) (0.44,6) (0.45,6) (0.45,6) (0.46,6) (0.46,5) (0.47,5) (0.47,5) (0.48,5) (0.48,5) (0.49,5) (0.49,5) (0.50,5) (0.50,5) 
};

\node at (axis cs: 0.3,12) {$N^I$};
\end{axis}
\end{tikzpicture}
\end{center}
\caption{The minimum number of bidders for $\Sigma^*$ to become an equilibrium when no information is given before the BAFO stage ($N^{NI}$) and when all the bids are revealed before the BAFO stage ($N^I$) for the auction presented in Section \ref{Sect:largeN}. For larger probabilities, both functions continue to decrease towards the minimal possible number of bidders in this model, $n=3$.}.
\label{fig:minimalNumofBidders}
\end{figure}
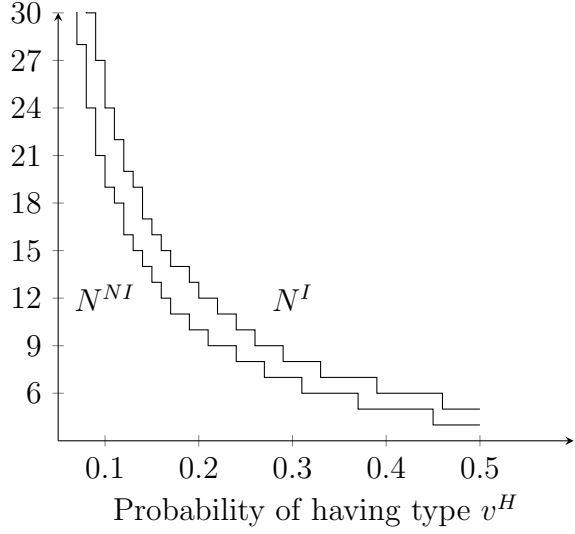

\section{The Main Result}\label{results}
\begin{comment}
The next lemma, which lays the path to the main theorem, is that the revenue-maximizing equilibrium is the unique equilibrium for a large enough number of bidders.
This result is similar to established findings from other types of auctions and follows the same logic.
As the number of bidders increases, the probability of being chosen for the second stage, and ultimately winning, decreases.
Hence, it is optimal to bid as high as possible to increase the chances of being selected.
\end{comment}

Our main result is that the revenue-maximizing strategy profile is the unique equilibrium for a large enough number of bidders, and that the required number of bidder increases with the information (in the sense of Definition \ref{def:more_inf}).
The intuition is that as the number of bidders increases, the probability of being chosen for the second stage, and ultimately winning, decreases.
Hence, for large enough number of bidders and regardless of the information, it is optimal to bid as high as possible to increase the chances of being selected.

\begin{theorem}\label{thmEffUniqueWithLargeN}
For any set of private values $V$, information structure $\Theta$, and family of distributions over $V$ which satisfy the uniform full support assumption $(F_n)_{n\in \N}$ (with a lower bound $\d>0$), there exists $N_0$ such that if the number of bidders is larger than $N_0$, then the unique symmetric equilibrium is $\Sigma^*$.  
\end{theorem}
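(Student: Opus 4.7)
The plan is to establish both that $\Sigma^*$ is a symmetric equilibrium once $n$ is large (existence) and that it is the only one for such $n$ (uniqueness). Both halves rely on the following consequence of Assumption \ref{uniformfullsupport}: conditional on bidder $i$ having type $v^k$ and on the event $A_k=\{$no other bidder has type $>v^k\}$, each other bidder has type $v^k$ with probability at least $\delta$, so by a standard Chernoff-type bound the random number $M_k$ of type-$v^k$ bidders has conditional mean $\ge 1+(n-1)\delta$ and concentrates around it.

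For existence I would fix a bidder $i$ of type $v^k$, suppose all other bidders play $\Sigma^*$, and bound every admissible deviation. Bids weakly above $v^k$ are excluded by assumption, and bidding $\beta^k$ in both stages weakly dominates any strategy that ends at $\beta^k$ but bids lower in the first round, so the only deviations worth analyzing lower the first-stage bid to some $b<\beta^k$. Against $\Sigma^*$ opponents such a deviator wins only when no other bidder has type $\ge v^k$, whereas $\Sigma^*$ wins whenever no other has type strictly $>v^k$. Conditioning on $A_k$ and using uniform full support,
\begin{equation}
\Pr(\text{deviation wins}\mid A_k,v_i=v^k)\le (1-\delta)^{n-1},
\end{equation}
while $\Pr(\Sigma^*\text{ wins}\mid A_k,v_i=v^k)=\mathbb{E}[1/M_k\mid A_k,v_i=v^k]\ge 1/n$. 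The common factor $\Pr(A_k\mid v_i=v^k)$ cancels between the two expected payoffs, reducing the question to comparing $(v^k-\beta^k)/n$ with $v^k(1-\delta)^{n-1}$; the first dominates once $n$ is large. Taking the worst $k\in\{1,\dots,K\}$ and the worst deviating bid in the finite menu $B$ yields an explicit threshold $N_0^{\mathrm{exist}}$.

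For uniqueness I would argue by downward induction on $k$, with $k=K$ as the base case. Let $\sigma^K$ be the symmetric strategy of a type-$v^K$ bidder in some candidate equilibrium, and let $\hat b$ be the top of the support of his second-stage bid. By symmetry the equilibrium payoff is at most $v^K\,\mathbb{E}[1/M_K\mid v_i=v^K]$, and Chernoff gives $\mathbb{E}[1/M_K\mid v_i=v^K]=O(1/n)$. If $\hat b<\beta^K$, deviating to $\beta^K$ in both stages strictly beats every other bidder and yields the constant payoff $v^K-\beta^K$, contradicting equilibrium for large $n$. If instead $\hat b=\beta^K$ but $\sigma^K$ assigns positive probability $1-p$ to some $b'<\beta^K$ in the second stage, indifference between $b'$ and $\beta^K$ would demand
\begin{equation}
(v^K-\beta^K)\Pr(\text{win}\mid \beta^K)=(v^K-b')\Pr(\text{win}\mid b'),
\end{equation}
whose left side is $\Theta(1/n)$ while the right side is bounded by $(v^K-b')(1-\delta p)^{n-1}$, exponentially small in $n$ --- impossible once $n$ is large. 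Hence the second-stage bid of $\sigma^K$ equals $\beta^K$ almost surely, and a weak-dominance argument in the spirit of Lemma \ref{thmNoInfoIsFPSA} forces the first-stage bid to be $\beta^K$ as well. For the inductive step, assume types $v^{k+1},\dots,v^K$ play $\sigma^*$; on $A_k$ those higher types are absent and lower types bid at most $\beta^{k-1}<\beta^k$, so type-$v^k$ bidders effectively play the base-case game among themselves, and the same argument applied conditional on $A_k$ yields $\sigma^k=\sigma^*$.

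The main obstacle is the uniqueness step, where one must handle simultaneously mixed strategies, arbitrary information-dependent second-stage behavior permitted by $\Theta$, and inductive conditioning on the potentially low-probability event $A_k$. Uniform full support is the crucial lever: after removing a Chernoff-small slice where $M_k$ is far below its conditional mean, every relevant win probability lies between $\Theta(1/n)$ and an exponentially small quantity, and this gap rules out every symmetric profile other than $\Sigma^*$ once $n$ is taken large enough.
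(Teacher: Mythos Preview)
Your overall architecture---prove existence, then prove uniqueness by downward induction on types after conditioning on $A_k$---matches the paper's, and the existence half is essentially correct (with the minor omission that a deviator who underbids in the first round can still be selected when there is \emph{exactly one} other type-$v^k$ bidder and then tie in the second round; this only adds a term of the same exponential order, so it does not break the argument).

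The genuine gap is in the uniqueness step, specifically the displayed ``indifference between $b'$ and $\beta^K$''. In the two-stage game the bidder never stands at a single node choosing freely between second-stage bids $b'$ and $\beta^K$; his second-stage action is a function of his first-stage bid and of the signal $\theta_i$ delivered by $\Theta$. The correct indifference is between first-stage bids (each followed by an optimal information-contingent continuation), not between realized second-stage bids. Once you write the problem that way, your bound $(v^K-b')\,(1-\delta p)^{n-1}$ on the right-hand side no longer follows, because $p$ is the probability that the \emph{second-stage} bid equals $\beta^K$, whereas selection into the BAFO round---and hence the win probability after bidding low---is governed by \emph{first-stage} bids.

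More importantly, your two bounds implicitly assume that $p=p(n)$ stays bounded away from $0$. If $p(n)\to 0$, then $(1-\delta p)^{n-1}$ is not exponentially small (it is $\Theta(1)$ when $np(n)$ is bounded), and simultaneously the payoff from bidding $\beta^K$ is no longer $\Theta(1/n)$ but can be $\Theta(1)$, since almost no competitor bids $\beta^K$. Your indifference comparison therefore collapses precisely in this regime. This is not a corner case: it is exactly the situation the paper isolates as its ``Case~2'' ($q_2^{lim}=1$), and it requires a separate argument. The paper handles it by passing to a subsequence, letting $q_1(n)\to 0$, and comparing the deviation payoff $f_\varepsilon(n)=\tfrac{v^K-\beta^K}{n\varepsilon}\bigl(1-(1-\varepsilon)^n\bigr)$ to the equilibrium payoff bound $g_\varepsilon(n)\le \tfrac{2(v^K-b_{\min})}{n\delta}$, showing $f_\varepsilon/g_\varepsilon\to O(1/\varepsilon)$. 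Your sketch has no analogue of this step, and without it the argument does not close.
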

\begin{proof}
Assume, by contradiction, that the result does not hold.
Then the set of $n\in\N$ for which $\Sigma^*$ is not the unique symmetric equilibrium is infinite.
For each $n$ in this set (actually, a sequence), there exists an equilibrium in which there exists a type $v^j$ that bids $\beta^j$ in the first stage with probability less than $1$.
Since the number of types is finite and the number of bids is finite, there exists a sub-sequence of these $n$s in which in equilibrium the highest type that does not play his respective $\beta$ in the first stage with probability $1$ is $v^k$ for some $k$, and the lowest bid he bids in equilibrium with nonzero probability is the same, $b_{\min}<\beta^k$.
W.l.o.g. we can assume that $k=K$ and this bidder is the bidder with the highest possible valuations.
Otherwise, in equilibrium, all bidders with types $v^j>v^k$ bid their respective $\beta^j$ and outbid him.
His only chance for a positive profit is if there are no bidders with a higher valuation.
We can denote this event by $D$ and the rest of the proof is identical even if $k<K$, when everything is conditional on $D$.

Denote by $q_2(n)$ the probability that a bidder with valuation $v^K$ will bid $b_{\min}$ and let  $q^{lim}_2=\lim_{n\rightarrow\infty}q_2(n)$ (the limit exists up to considering a subsequence).
Also denote $q_1(n) = 1- q_2(n)$.
Let $A$ be the expected payoff of the bidder with valuation $v^K$ in the equilibrium in which he is bidding $b_{\min}$ at stage 1 and let $B$ be the payoff when he deviates to bid $\beta^K$.
We divide the remaining discussion into two cases: $q_2^{lim}<1$ and $q_2^{lim}=1$.

\textbf{Case 1 : $q_2^{lim}<1$.} 

In that case, for $n$ large enough, there is a probability bounded away from zero that a bidder with valuation $v^k$  will bid higher than $b_{min}$.
Formally, there exists $\bar{q}\in(0,1)$ and $N_0\in\mathbb{N}$ such that for all $n\geq N_0$:  $q_1(n)\geq \bar{q}>0$. 
In both stages, no bidder bids strictly more than $\beta^K$.
Thus, the probability of a bidder with valuation $v^K$ winning the auction when using strategy $\sigma^*$ can be (very loosely) bounded from below by $\frac{1}{n}$.
To conclude, $B\geq (v^K-\beta^K)\frac{1}{n}$.

When a bidder bids $b_{min}$ in the first stage, he will move to the second stage (and thus have some positive probability of winning) only if at most one bidder bids $\beta^K$ in the first stage.
Therefore, 
\begin{equation}
A\leq \left[(1-\d\bar{q})^{n-1}+(n-1)(1-\d\bar{q})^{n-2}\d\bar{q} \right](v^K-b_{min})
. \nonumber
\end{equation}

For large enough $n$, this bound converges to $0$ exponentially while the lower bound on $B$ converges to $0$ more slowly.
Thus, for large enough $n$, $A<B$.

\textbf{Case 2 : $q_2^{lim}=1$.} 

In this case, at the limit, no bidder bids $\beta^K$ in the first stage.
Formally, for every $\varepsilon>0$ there exists $N_0\in\mathbb{N}$ such that the probability that a bidder with valuation $v^K$ bids $\beta^K$ in the first stage is smaller than $\varepsilon$. 
If a bidder with valuation $v^K$ bids according to $\sigma^*$, then he wins with the probability of being chosen among all those bidding the same. 

The number of bidders who have high valuation and bid $\beta^K$ can be bounded from above using a binomial random variable with the parameters $n$ and $\varepsilon$.
It is well-known that for a binomial random variable $X\sim Bin(n,p)$, 
\begin{equation}\label{eq expectation}
\mathbb{E}\left[\tfrac{1}{X+1}\right]=\tfrac{1}{np}\left[1-(1-p)^n\right],
\end{equation}
therefore, the following bound is obtained:
For $\varepsilon$ small enough and $n$ large enough $B\geq \tfrac{v^K-\beta^K}{n \varepsilon}(1-(1-\varepsilon)^n)=f_\varepsilon(n)$.

A bidder who bids $b_{min}$ can win only if there is at most one bidder who bids above $b_{min}$ in the first stage.
Denote this event by $E$.
Given $E$, the bidder needs to be chosen among all the other bidders who bid $b_{min}$.

Conditioning on $E$, the probability to win is lower than if all other bidders were (independently) bidding $b_{min}$ with probability $\delta(1-\varepsilon)$ and strictly lower with the remaining probability.

From Eq.~\eqref{eq expectation} again, we have
\begin{equation}
A \leq \Pr(E) \tfrac{v^K-b_{min}}{n\d(1-\varepsilon)}\leq 2\tfrac{v^K-b_{min}}{n\d}=g_\varepsilon(n). \nonumber
\end{equation}

Since $\tfrac{f_\varepsilon(n)}{g_\varepsilon(n)}\to O(\tfrac{1}{\epsilon})$ as $n\to\infty$,
then for $\varepsilon$ small enough there exists $N_0$ such that for every $n>N_0$, $A<B$.
To conclude, $b_{min}$  cannot be played in the first stage in equilibrium.
\hfill\end{proof}

Our main goal is to compare information structures.
More precisely, we wish to compare the expected revenue of the seller when using two information structure that can be pair-wise compared (and in particular, the non-informative information structure and any informative one).
Clearly, the revenue-maximizing equilibrium provides the best expected equilibrium payoff for the seller.
She therefore strives to design an information structure under which the revenue-maximizing equilibrium will be the unique one.
Our next theorem states that whenever the revenue-maximizing equilibrium appears for some structure $\Theta$, it also appears for structures less informative than $\Theta$.

\begin{theorem}\label{thmEFFEqWithIsAlsoWithout}
When $\Sigma^*$ is an equilibrium in an auction with some information structure $\Theta$, it is also an equilibrium in an auction with any information structures $\Theta'$ which is less informative than $\Theta$.

Thus, whenever $\Sigma^*$ is an equilibrium in an auction with some informative information structure, it is also an equilibrium in an auction with any non-informative information structure.
\end{theorem}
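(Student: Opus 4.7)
My plan is to adapt Blackwell's comparison theorem to this setting. The high-level idea is that any candidate deviation $\sigma_i'=(\sigma_i^{1\prime},\sigma_i^{2\prime})$ of bidder $i$ under $\Theta'$ can be ``simulated'' under $\Theta$ by a strategy $\tilde\sigma_i$ yielding the same expected payoff against $\sigma_{-i}^*$. Granting this, the proof concludes quickly: since $\Sigma^*$ is an equilibrium under $\Theta$, one has $U_\Theta(\tilde\sigma_i,\sigma_{-i}^*)\le U_\Theta(\sigma_i^*,\sigma_{-i}^*)$; since $\sigma_i^*$ and $\sigma_{-i}^*$ are both signal-free (every bidder's second-stage bid equals his first-stage bid), the expected payoff of $\sigma_i^*$ against $\sigma_{-i}^*$ does not depend on the information structure, so $U_\Theta(\sigma_i^*,\sigma_{-i}^*)=U_{\Theta'}(\sigma_i^*,\sigma_{-i}^*)$. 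Chaining these gives $U_{\Theta'}(\sigma_i',\sigma_{-i}^*)\le U_{\Theta'}(\sigma_i^*,\sigma_{-i}^*)$ for every deviation $\sigma_i'$, i.e.\ $\sigma_i^*$ remains a best response under $\Theta'$. The second sentence of the theorem follows at once, since every non-informative structure is less informative than every informative structure (as noted after Definition~\ref{def:more_inf} via Lemma~\ref{lem:non_info_iff_iid}).

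To build $\tilde\sigma_i$, I apply Definition~\ref{def:more_inf} with $\sigma_{-i}=\sigma_{-i}^*$ to obtain a garbling $L_i\colon\Theta_i\times B\to\Delta(\Theta_i)$ satisfying~\eqref{eq_more_info}. I then set $\tilde\sigma_i^1:=\sigma_i^{1\prime}$ and, for each $(v_i,b_i^1,\theta_i)$, let $\tilde\sigma_i^2(v_i,b_i^1,\theta_i)$ be the behavioral bid obtained by first drawing $\theta_i'\sim L_i(\theta_i,b_i^1)$ and then playing $\sigma_i^{2\prime}(v_i,b_i^1,\theta_i')$. This is a valid behavioral strategy that mirrors $\sigma_i'$ up to the garbling of signals, and its first-stage bid distribution coincides with that of $\sigma_i'$.

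The main obstacle is the payoff equality $U_\Theta(\tilde\sigma_i,\sigma_{-i}^*)=U_{\Theta'}(\sigma_i',\sigma_{-i}^*)$. Expanding both expectations as summations over $(b_i^1,b_{-i},\theta_i',b_i^2)$ and using that $\sigma_{-i}^*$ makes $b_{-i}$ a function of the opponents' types alone (so its distribution is the same in both setups), the two payoffs coincide once the garbled factor $\sum_{\theta_i}\Pr_\Theta(\theta_i\mid b_{-i},b_i^1)\,\Pr(L_i(\theta_i,b_i^1)=\theta_i')$ matches $\Pr_{\Theta'}(\theta_i'\mid b_{-i},b_i^1)$ \emph{pointwise} in $b_{-i}$, which is needed because the payoff factor $(v_i-b_i^2)\mathbf{1}[i\text{ wins}]$ depends on $b_{-i}$ through the advancement and winning events. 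Equation~\eqref{eq_more_info} delivers this identity only after averaging over $b_{-i}$ with weights $\Pr_{\sigma_{-i}^*}(b_{-i})$, so the bulk of the technical work is closing this gap. The route is to exploit the ``for every $\sigma_{-i}$'' quantification in Definition~\ref{def:more_inf}: specializing the definition to degenerate opponent profiles concentrated on each bid vector $b_{-i}^\circ$ yields a pointwise-in-$b_{-i}$ form of the garbling identity (with possibly $b_{-i}^\circ$-dependent garblings), which suffices for the termwise matching of the two summations. Under the stronger variant of Definition~\ref{def:more_inf} mentioned just after it (where $L_i$ is independent of $\sigma_{-i}$), the pointwise identity is automatic and this matching step is immediate.
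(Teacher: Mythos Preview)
Your approach coincides with the paper's: take the garbling $L_i$ supplied by Definition~\ref{def:more_inf} for $\sigma_{-i}=\Sigma^*_{-i}$, build the simulated strategy $\tilde\sigma_i$ under $\Theta$ by first drawing $\theta_i'\sim L_i(\theta_i,b_i^1)$ and then playing $\sigma_i^{2\prime}(v_i,b_i^1,\theta_i')$, and conclude that any profitable deviation under $\Theta'$ lifts to one under $\Theta$. The paper's proof asserts the payoff equivalence $U_\Theta(\tilde\sigma_i,\sigma_{-i}^*)=U_{\Theta'}(\sigma_i',\sigma_{-i}^*)$ in a single sentence; you go further and correctly isolate the real difficulty, namely that~\eqref{eq_more_info} matches only the $\Pr_{\sigma_{-i}^*}$-\emph{averaged} law of the signal, whereas the payoff identity needs the \emph{joint} law of $(\theta_i',b_{-i})$ to agree, since both the selection event and the winning payoff depend on $b_{-i}$.

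Your proposed resolution, however, does not close this gap. Specializing Definition~\ref{def:more_inf} to the degenerate profile concentrated on $b_{-i}^\circ$ does yield a garbling $L_i^{b_{-i}^\circ}$ satisfying the pointwise identity $\Pr_{\Theta'}(\theta_i'\mid b_{-i}^\circ,b_i)=\sum_{\theta_i}\Pr_\Theta(\theta_i\mid b_{-i}^\circ,b_i)\,\Pr(L_i^{b_{-i}^\circ}(\theta_i,b_i)=\theta_i')$, but these garblings vary with $b_{-i}^\circ$. The strategy $\tilde\sigma_i$ you already fixed uses the \emph{single} garbling $L_i$ attached to $\sigma_{-i}^*$, and bidder~$i$ cannot select among the $L_i^{b_{-i}^\circ}$ at the second stage because he does not observe $b_{-i}$ (only his signal). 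Hence the ``termwise matching of the two summations'' does not follow from the pointwise identities you obtain. Your final remark is the honest way out: under the stronger variant of the definition (one $L_i$ independent of $\sigma_{-i}$, hence valid for every degenerate profile, hence pointwise in $b_{-i}$), the payoff equality is immediate and both your argument and the paper's one-line assertion become rigorous.
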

\begin{proof}
Suppose that $\Sigma^*$ is an equilibrium in an auction with the information structure $\Theta$ and assume by contradiction that bidder $i$ has a profitable deviation from the strategy profile $\Sigma^*$ in an auction with the less informative structure $\Theta'$.
Denote this strategy by $\hat{\sigma}_i=(\hat{\sigma}_i^1,\hat{\sigma}_i^2)$ and let $L_i$ denote the garbling function that corresponds to the strategy profile $\Sigma^*_{-i}$.

Assume that the information structure is $\Theta$ and consider the following strategy.
At the first stage, bid according to $\hat{\sigma}_i^1$ and denote the bid by $b_i^1$.
In the second stage, after receiving the signal $\theta$, use $L_i(\theta,b_i^1)$ to choose a new signal at random from $\Theta$.
Denote this signal by $\theta'$.
Finally, bid as dictated by $\hat{\sigma}_i^2(v_i,b_i^1,\theta')$.
By using this strategy, bidder $i$ disregards the information provided by $\Theta$ and acts as if the information structure is actually $\Theta'$.
Since $\hat{\sigma}$ is a profitable deviation with $\Theta'$, this is also a profitable deviation with $\Theta$, in contradiction to the fact that $\Sigma^*$ is an equilibrium with the information structure $\Theta$.
\hfill\end{proof}
%\begin{proof}
%Following Lemma \ref{thmNoInfoIsFPSA}, the only strategies to consider in non-informative information structures are strategies in which the bid in both stages is the same.
%These strategies (which disregard the information) are also available in the informative information structures, so any possible deviation from $\Sigma^*$ in the non-informed model is also a possible deviation in the informed model.
%As there is no profitable deviation in the informed model, such deviation obviously does not exist in the non-informed model either. %\hfill\end{proof}

This result can easily be generalized. 
A strategy profile that is an equilibrium when information is available but not used, is still an equilibrium when this information is not available, simply because the set of available deviations is smaller without information.
For example, if bidding $b^1$ regardless of type is an equilibrium in some informative information structure it is also an equilibrium in non-informative information structures, as the information about the bids does not change the prior regarding the bidders' types.

A direct corollary of this result is that a smaller number of bidders is needed to generate the revenue-maximizing equilibrium when no information is conveyed to the bidders.
Any information provided by the seller might serve as a basis for profitable deviations.

\begin{corollary}\label{thm:withoutinfo less bidders}
Let $\Theta$ be an information structure and denote by $N^{\Theta}$ the minimal number of bidders such that whenever there are more bidders, $\Sigma^*$ is an equilibrium with the information structure $\Theta$.
If $\Theta'$ is less informative than $\Theta$, then $N^{\Theta}\geq N^{\Theta'}$.

In particular, if $NI$ is some non-informative information structure and $FI$ is a fully-informative information structure, then $N^{NI}\leq N^{\Theta}\leq N^{FI}$. 
\end{corollary}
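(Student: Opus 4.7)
The plan is to derive the corollary as an almost immediate consequence of Theorem \ref{thmEFFEqWithIsAlsoWithout}, interpreted in terms of the threshold $N^\Theta$. The definition of $N^\Theta$ is: for every $n > N^\Theta$, the strategy profile $\Sigma^*$ is an equilibrium in the auction with $n$ bidders and information structure $\Theta$. Fix such an $n > N^\Theta$. Applying Theorem \ref{thmEFFEqWithIsAlsoWithout} with the roles "$\Theta$" (the more informative structure) and "$\Theta'$" (the less informative structure) instantiated as in the corollary's hypothesis gives immediately that $\Sigma^*$ is also an equilibrium in the auction with $n$ bidders and information structure $\Theta'$. Hence every $n > N^\Theta$ qualifies for the minimality condition defining $N^{\Theta'}$, so by minimality $N^{\Theta'} \leq N^\Theta$.

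For the second assertion, I would use the observation made in the excerpt (right after the proof of Lemma \ref{lem:non_info_iff_iid}) that every non-informative information structure is less informative than every other information structure in the sense of Definition \ref{def:more_inf}, because the garbling $L_i$ can be taken to be the constant map $L_i(\theta,b_i) := \Pr_{\Theta_{NI}}(\cdot \mid b_i)$, which matches Eq.~\eqref{eq_more_info} on both sides. Symmetrically, the fully informative structure $FI$, which reveals all first-stage bids, is more informative than any $\Theta$: a bidder who sees the full vector of bids can always garble it down to the signal he would have received under $\Theta$ by simulating the conditional distribution $\Pr_\Theta(\theta_i \mid b_{-i},b_i)$. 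Applying the first part with $(\Theta,\Theta') = (\Theta,NI)$ yields $N^{NI} \leq N^\Theta$, and with $(\Theta,\Theta') = (FI,\Theta)$ yields $N^\Theta \leq N^{FI}$, combining to the displayed chain $N^{NI} \leq N^\Theta \leq N^{FI}$.

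There is essentially no obstacle: the entire corollary is a packaging of Theorem \ref{thmEFFEqWithIsAlsoWithout} through the definition of the threshold $N^\Theta$. The only mild subtlety worth flagging is well-definedness of $N^\Theta$ itself, which is guaranteed by Theorem \ref{thmEffUniqueWithLargeN}: for any information structure and any family $(F_n)$ satisfying the uniform full support assumption, $\Sigma^*$ is an equilibrium (in fact the unique one) for all sufficiently large $n$, so the set of $n$ over which we take the minimum is nonempty. This ensures $N^\Theta,N^{\Theta'},N^{NI},N^{FI}$ are all finite and the inequalities are substantive.
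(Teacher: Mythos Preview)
Your proposal is correct and follows essentially the same approach as the paper: apply Theorem~\ref{thmEFFEqWithIsAlsoWithout} at every $n>N^{\Theta}$ and invoke the minimality of $N^{\Theta'}$. You are in fact more thorough than the paper's own proof, which handles only the first inequality explicitly; your justification of the chain $N^{NI}\leq N^{\Theta}\leq N^{FI}$ via the extremality of non-informative and fully-informative structures, and your remark that Theorem~\ref{thmEffUniqueWithLargeN} guarantees $N^{\Theta}$ is well-defined, are welcome additions.
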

\begin{proof}
For $N^{\Theta}$ bidders, $\Sigma^*$ is an equilibrium when the information is $\Theta$.
According to  Theorem \ref{thmEFFEqWithIsAlsoWithout}, $\Sigma^*$ is also an equilibrium with the less informative information structure $\Theta'$ with $N^{\Theta}$ bidders.
Since $N^{\Theta'}$ is the minimum number of bidders for which $\Sigma^*$ is an equilibrium with the information structure $\Theta'$, it follows that $N^{\Theta'}\leq N^{\Theta}$.\hfill\end{proof}

The required number of bidders can be computed in each auction based on the auction parameters  (bids, private values, and $F_n$).
We demonstrate this calculation in Section \ref{Sect:largeN}.
It is clear from an examination of the mechanism behind the example that for reasonable parameter values, ten or even fewer bidders could be sufficient for the revenue-maximizing equilibrium to emerge. 

\section{Concluding Remarks}\label{Sect:Discussion}
%Restate the question you have attempted to answer
%Restate your main contribution and/or findings
%Indicate lines for further research
%Remind readers of the limitations or caveats of your study
%Identify broader implications (e.g., policy implications) 

This paper sought to determine the optimal information about bids that a seller should disclose to the bidders so as to increase her revenue in two-stage auctions.
The answer to this question strongly depends on the number of bidders.
When the number of bidders is small, all the parameters of the auction need to be considered;  there are examples where it is better to reveal information and others where it is better to withhold it.
For a very large number of bidders, the answer is that the information structure is irrelevant, as the only equilibrium is the one where the bidders submit their maximum bid, regardless of the information.
Moreover, this strategy profile also becomes an equilibrium for fewer bidders when no information is revealed, than under any informative information structure.

We showed that the revenue-maximizing strategy profile requires fewer bidders to be an equilibrium when less information is provided.
We conjecture that this also applies to uniqueness -- fewer bidders are required for the revenue-maximizing equilibrium to be the \emph{unique} equilibrium when the information structure is non-informative.
Any information revealed by the seller can create additional equilibria.
This strengthens the idea that the best tactic on information is to reveal nothing -- not only does the revenue-maximizing profile is an equilibrium first in this case, it also becomes the unique equilibrium first.

As elsewhere (e.g. \cite{quint2018theory}), uniqueness can be shown only in special cases.
We were able to show that this holds for a simple example with pure strategies, which can be generalized to a large class of two-stage auctions.
This proof is given in Appendix \ref{app:uniquness}.
In addition, our conjecture regarding uniqueness was verified by computer simulation on another class of auctions for a large set of parameters.

We conclude that when the number of bidders is not too small\footnote{As a rule of thumb, the number of bidders required is such that for every private valuation $v^j$, the expected number of bidders that have this private value given it is the maximal private value anyone has, is at least $3$.
For example, if valuations are independent and each bidder has equal probability to have any of the private valuation, there should be roughly $n=3K$ bidders for a bidder with the valuation $v^K$ to feel enough competitive pressure to raise his bids.}  or is unknown, there is some advantage to not revealing information.
This might allow the revenue-maximizing strategy profile to become an equilibrium and possibly the unique equilibrium.

If the sole purpose of the second stage is to encourage competition and the valuations are private, then it is advisable to consider simplifying the process and conducting a single-stage sealed-bid auction.
However, the main purpose of the second stage may be to discuss design (in design-build auctions) or to enable the bidders to learn more about the item being auctioned, either from the auctioneer or from the bids of the other bidders (when valuations have a common-value component).
In such cases, the optimal information policy  regarding first-stage bids remains to be determined by future research. 

\bibliographystyle{plainnat}
\bibliography{CitationsTwoStageAuctions}

\newpage
\appendix
\numberwithin{equation}{section}

\section{Uniqueness of the Revenue-Maximizing Equilibrium}\label{app:uniquness}

We show that $\Sigma^*$ is the unique equilibrium even for fewer bidders when the information structure is non-informative, taking the case of two possible valuations, three possible bids and considering pure strategies only.
We conjecture that this is the case in general, regardless of the specific parameters of the auction.
Partial results in the general case, based on computer simulations, are also reported.

The following proposition compares two extreme information structures:
the non-informative information structure and the fully-informative information structure.

\begin{proposition}\label{prop:uniqueness simple case}
Consider an auction with $n$ bidders, two possible types $V=\{v^1,v^2\}$, and three possible bids $B=\{b^1,b^2,b^3\}$ ordered in the following way: $b^1<v^1<b^2<b^3<v^2$.
Assume that the valuations are independent and identically distributed, with probability $p$ of each bidder having valuation $v^2$. 
If $\Sigma^*$ is the unique equilibrium in pure strategies when the information structure is fully informative, then it is also the unique equilibrium in pure strategies when the information structure is non-informative.
\end{proposition}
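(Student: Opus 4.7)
My plan is to argue by contrapositive: assume that in the non-informative setting there exists a pure-strategy symmetric equilibrium $\sigma^{NI}$ different from $\Sigma^*$, and exhibit a non-$\Sigma^*$ pure-strategy symmetric equilibrium in the fully-informative setting, contradicting the uniqueness assumption. To start, I narrow the candidates. In any symmetric pure-strategy equilibrium, a type-$v^1$ bidder must bid $b^1$ in both stages, since bidding $b^2$ or $b^3$ exceeds $v^1$ and gives non-positive payoff on winning. Hence any non-$\Sigma^*$ equilibrium must have type $v^2$ play something other than $(b^3,b^3)$. By Lemma~\ref{thmNoInfoIsFPSA} applied to the non-informative setting, I may assume type $v^2$'s stage-1 and stage-2 bids coincide, so $\sigma^{NI}$ is one of two profiles: either type $v^2$ bids $b^1$ in both stages (Case~A) or $b^2$ in both stages (Case~B).

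For Case~A, the candidate fully-informative profile is that every bidder bids $b^1$ in both stages. Because both types pool on $b^1$, the opponent's revealed stage-1 bid conveys no type information, so the only deviation not already covered by the non-informative analysis is a stage-2 raise. For a type-$v^2$ bidder reaching stage 2 with his opponent sitting at $b^1$, raising to $b^2$ yields $v^2-b^2$ whereas staying and coin-flipping yields $(v^2-b^1)/2$; the non-informative equilibrium inequality $(v^2-b^1)/n\geq v^2-b^2$ immediately gives $(v^2-b^1)/2\geq v^2-b^2$ (using $n\geq 2$), so this stage-2 raise is unprofitable. The remaining stage-1 deviations are bounded exactly as in the non-informative setting, so the profile is a non-$\Sigma^*$ fully-informative equilibrium.

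For Case~B, the candidate fully-informative profile has type $v^2$ bid $b^2$ in stage 1 and play the stage-2 best reply to the opponent's revealed bid: against $b^1$, stay at $b^2$ and win; against $b^2$, stay at $b^2$ if $v^2\leq 2b^3-b^2$ (sub-case~B.1) and otherwise raise to $b^3$ (sub-case~B.2); against an off-path $b^3$, raise to $b^3$ and tie. Stage-2 consistency holds by this sub-case split. A stage-1 deviation to $b^3$ is dampened because a $v^2$ opponent best-responds off-path by matching at $b^3$, so the deviation payoff is at most $(v^2-b^3)\bigl(1+(1-p)^{n-1}\bigr)/2\leq v^2-b^3$, which is bounded by $\tfrac{1-(1-p)^n}{np}(v^2-b^2)$ through the non-informative inequality forbidding a profitable deviation to $b^3$. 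The most delicate deviation is to bid $b^1$ in stage 1 and then raise optimally in stage 2 using the opponent's revealed type: decomposing by the number $Y\sim\mathrm{Bin}(n-1,p)$ of other type-$v^2$ bidders and summing the $Y=0$ and $Y=1$ contributions (for $Y\geq 2$ the deviator fails to reach stage 2) yields a deviation payoff that is bounded above by the non-deviation payoff, combining the two non-informative equilibrium inequalities (one forbidding a profitable deviation to $b^1$, the other to $b^3$). This final calculation is the main obstacle, since the deviator's ability to condition his stage-2 raise on the opponent's revealed type has no analogue in the non-informative setting and requires careful probabilistic bookkeeping using both non-informative inequalities simultaneously.
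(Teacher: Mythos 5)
Your overall architecture (contrapositive, reduce the extra non-informative equilibrium to ``all bid $b^1$'' or ``all type-$v^2$ bid $b^2$'', then exhibit a corresponding non-$\Sigma^*$ equilibrium in the fully informative model) is the same as the paper's, and your Case~A matches the paper's observation that the two information structures coincide on the pooling-at-$b^1$ profile. The problem is in Case~B, and it sits exactly where the paper has to work hardest. Write $\alpha=v^2-b^2$, $\beta=v^2-b^3$, $X\sim \mathrm{Bin}(n-1,p)$, $x=\E[\tfrac{1}{X+1}]$, $y=\Pr(X=0)$. In your sub-case~B.2 (where $\beta>\alpha/2$, so the stage-2 best reply to a revealed $b^2$ is to raise to $b^3$), the candidate informed profile's equilibrium payoff is \emph{not} the non-informative payoff $\alpha x$ that appears in your inequality ``forbidding a profitable deviation to $b^3$''; it is $(\alpha-\beta)y+\beta x$, which is strictly smaller than $\alpha x$ whenever $y<x$ (since $(\alpha-\beta)y+\beta x-\alpha x=(\alpha-\beta)(y-x)$). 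So your chain ``deviation payoff $\le \tfrac{\beta}{2}(1+y)\le\beta\le\alpha x$'' bounds the deviation by the wrong baseline and does not establish that the deviation to $b^3$ is unprofitable. The correct claim, $(\alpha-\beta)y+\beta x\ge\tfrac{\beta}{2}(1+y)$, is true under the two non-informative equilibrium inequalities $\alpha x\ge\beta$ and $\beta x\ge\alpha y$ together with $\beta>\alpha/2$, but proving it is precisely the content of the paper's system of inequalities and the elementary but non-obvious algebra (using $\tfrac12<x<\tfrac12+\tfrac{y}{2}$, $y<x^2$, etc.) that shows the system has no solution. That step cannot be waved through with the one-line bound you give.

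Second, you explicitly defer the deviation ``bid $b^1$ in stage~1, then raise optimally on seeing the opponent's revealed bid'' as ``the main obstacle'' without carrying it out, so the proof is incomplete there as well. Note that the paper's verification of this deviation uses the conclusion $\beta\le\alpha/2$ obtained from the previous contradiction (i.e., it effectively eliminates your sub-case~B.2 before treating this deviation), whereas your organization keeps B.2 alive, so your bookkeeping for this deviation would have to use $\max\{\beta,\alpha/2\}=\beta$ and a slightly different chain (it does go through, via $y+\tfrac{1}{2}\Pr(X=1)\le x$, but you have to write it). As it stands, the two load-bearing verifications of Case~B are respectively justified by an invalid comparison and left undone.
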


\begin{proof}
Assume w.l.o.g. that $b^1=0$ and $v^2=1$.
Since $b^1<v^1<b^2<b^3<v^2$, bidders of type $v^1$ always bid in equilibrium $b^1$.
Bidders of type $v^2$ can choose any of the three possible bids.
In the rest of the discussion, we focus on one such bidder.
We denote by $X$ the number of other high-type bidders, i.e. $X\sim Bin(n-1,p)$.
The possible bids, private values, and revenues are illustrated in this figure:
\begin{figure}[ht]
\centering
\begin{tikzpicture}[x=1.5cm]
\draw[black,-,thick,>=latex]
  (0,0) -- (10,0) node[below right] {$v^2=1$};
\foreach \Xc in {0,4,6,8,10}
{
  \draw[black,thick]
    (\Xc,0) -- ++(0,7pt) ;
}

  \fill[red]
    ([xshift=5pt]6,1)
      rectangle node[above, color=black] {\strut\small$\alpha$}
    ([xshift=-5pt]10,1.1);

 \fill[red]
    ([xshift=5pt]8,0.25)
      rectangle node[above, color=black] {\strut\small$\beta$}
    ([xshift=-5pt]10,0.35);

  \node[below,align=left,anchor=north,inner xsep=0pt]
  at (0,0)
  {$b^1=0$};
  
  \node[below,align=left,anchor=north,inner xsep=0pt]
  at (4,0)
  {$v^1$};

  \node[below,align=left,anchor=north,inner xsep=0pt]
  at (6,0)
  {$b^2$};

  \node[below,align=left,anchor=north,inner xsep=0pt]
  at (8,0)
  {$b^3$};
\end{tikzpicture}
\end{figure}

The strategy of the proof is to construct a counter-example -- a choice of parameters for which $\Sigma^*$ is the unique equilibrium in the informed model but not in the non-informed model.
We will show that such a counter-example is impossible.

\emph{The non-informed model}

For a counter-example, we need an equilibrium in pure strategies additional to $\Sigma^*$ in the non-informed model, but no additional equilibria in the fully-informed model.

Clearly, if the strategy profile in which everyone bids $b^1$ is an equilibrium in the non-informed model, then it is also an equilibrium in the fully-informed model.
This is because they are essentially the same -- no new information is revealed before the BAFO stage, as all bids are known and are the same regardless of private value.
Therefore in a counter-example, it cannot be an equilibrium, which means that there exists a profitable deviation from this strategy profile. 
The profitable deviation can be to bid $b^2$ or to bid $b^3$. 
However, if all other bidders bid $b^1$, bidding $b^3$ has no strategic advantage over bidding $b^2$: both guarantee a win, and the revenue is higher when $b^2$ is bid. 
Thus, bidding $b^2$ is a profitable deviation:

\begin{equation}
U_i(b^1\vert v^2)=\frac{1}{n}<\alpha=U_i(b^2,b^1_{-i}\vert v^2) \nonumber
\end{equation}

The only other pure strategy profile is to bid $b^2$ (denoted by $\Sigma$), so it is assumed to be the additional equilibrium.
Hence, bidding $b^3$ is not a profitable deviation:
\begin{equation}
U_i(\Sigma\vert v^2)=\alpha\E(\frac{1}{X+1})\geq U_i(b^3,\Sigma_{-i}\vert v^2)=\beta \nonumber
\end{equation}
In addition, $\Sigma^*$ is equilibrium in the non-informed model so bidding $b^2$ is not a profitable deviation:
\begin{equation}
U_i(\Sigma^*\vert v^2)=\beta\E(\frac{1}{X+1})\geq U_i(b^2,\Sigma_{-i}^*\vert v^2)=\alpha \Pr(X=0) \nonumber
\end{equation}

\emph{The fully-informed model}

We construct an example where $\Sigma^*$ is the unique equilibrium with information.
Thus, there is no profitable deviation from $\Sigma^*$ and no other equilibrium.
More precisely:

Bidding $b^2$ in both stages regardless of information is not an equilibrium in the fully-informed model.
The only profitable deviation can be to employ a strategy that uses information, which must be either $\sigma_0$=``play 0 at the first bid, then best respond to information'' or $\sigma_\alpha$=``play $b^2$ at the first bid, then best respond to information''.

\textbf{Is $\sigma_\alpha$ a profitable deviation?}
The best response is to continue bidding $b^2$ if the other bid is $0$ (there are no $v^2$ bidders) and bid $b^3$ if the other bid is $b^2$ (otherwise, bidding $b^2$ again would result in the equilibrium strategy).
To conclude, the best response is getting $\beta$ with probability $1$ rather than splitting $\alpha$ between the two.
Thus, with some implicit conditions on the bids, we get:
\begin{equation}
0<\tfrac{\alpha}{2}<\beta<\alpha<1. \nonumber
\end{equation}
In addition, $\sigma_\alpha$ cannot be an equilibrium in the fully-informed model, and the only possible deviation is bidding $b^3$ in both stages:
\begin{equation}
U_i(\Sigma_\alpha\vert v^2)=(\alpha-\beta)\Pr(X=0)+\beta\E(\frac{1}{X+1})< U_i(b^3,{\Sigma_\alpha}_{-i}\vert v^2)=\frac{\beta}{2}(1+\Pr(X=0)). \nonumber
\end{equation}
We can rearrange these inequalities as inequalities between $\alpha$ and $\beta$ and combine them all:
\begin{equation}\label{Eq: combination of all ineq}
0<\alpha \max\{\frac{1}{2},\frac{\Pr(X=0)}{\E(\frac{1}{X+1})},\frac{\Pr(X=0)}{\frac{1}{2}+\frac{3\Pr(X=0)}{2}-\E(\frac{1}{X+1})}\} < \beta<\alpha\E(\frac{1}{X+1})<1
\end{equation}
In addition, from the properties of expectation, it is clear that $\E(\frac{1}{X+1})<\frac{1}{2}+\frac{\Pr(X=0)}{2}$.
Set $x=\E(\frac{1}{X+1}),y=\Pr(X=0)$.
A necessary condition for \eqref{Eq: combination of all ineq} to hold is
\begin{eqnarray}
\frac{1}{2}<x<\frac{1}{2}+\frac{y}{2} \label{0.2} \\
y<x^2 \label{Eq: 1.1+2.1} \\
y<(\frac{1}{2}+\frac{3}{2}y-x)x \label{Eq: 4.1}
\end{eqnarray}
Summing \eqref{Eq: 1.1+2.1} and \eqref{Eq: 4.1} yields $y<\frac{x}{4-3x}$ which combined with \eqref{0.2} results in $\frac{1}{2}<x<\frac{2}{3}$.
On the other hand, \eqref{Eq: 4.1} is equivalent to  $y(1-\frac{3}{2}x)<\frac{1}{2}x-x^2$.
The right-hand side is negative in the region $\frac{1}{2}<x<\frac{2}{3}$ whereas the left-hand side is positive.
A contradiction -- $\sigma_\alpha$ is not a profitable deviation and $\beta\leq\frac{\alpha}{2}$.\hfill$\bigtriangleup$

\textbf{Can $\sigma_0$ be a profitable deviation?}
The profit with the deviation is:
\begin{equation}
U_i(\sigma_0,\Sigma_{\alpha_{-i}}\vert v^2)=\Pr(X=0)\frac{2}{n}\max\{\alpha,\frac{1}{2}\}+\Pr(X=1)\frac{1}{n-2}\max\{\beta,\frac{\alpha}{2}\} \nonumber
\end{equation}
Our previous case resulted in $\beta\leq\frac{\alpha}{2}$.
In addition, $\frac{1}{n}\leq\alpha$ and $\frac{2\alpha}{n}\leq\alpha$ so $\frac{2}{n}\max\{\alpha,\frac{1}{2}\}\leq\alpha$:
\begin{eqnarray}
U_i(\sigma_0,\Sigma_{\alpha_{-i}}\vert v^2) &\leq& \Pr(X=0)\alpha+\Pr(X=1)\frac{1}{n-2}\frac{\alpha}{2}\leq \nonumber \\ 
&\leq &\Pr(X=0)\alpha+\Pr(X=1)\frac{\alpha}{2}=\alpha(\Pr(X=0)+\frac{1}{2}\Pr(X=1))<  \nonumber \\
&<& \alpha\E(\frac{1}{X+1})=U_i(\Sigma_\alpha\vert v^2) \nonumber
\end{eqnarray}
which means that $\sigma_0$ is not a profitable deviation either.\hfill$\bigtriangleup$

To conclude, it is impossible to satisfy all the conditions in these settings, and if $\Sigma^*$ is not unique without information, it cannot be unique with information.
\hfill \end{proof}

The key component of the proof is that only bidders with private valuation $v^2$ need to be considered and there is a relatively small number of parameters and inequalities that lead to a contradiction.
This proof can be applied to more general cases, with more possible private values and bids, as long as these features remain.

Changing the order to $b^1<b^2<v^1<b^3<v^2$ complicates the problem significantly.
After removing dominated strategies, there are 13 (pure) strategy profiles in the informed model which need to be excluded from being an equilibrium.
This results in a large number of inequalities and ``cases'' to verify, that do not converge to such a neat result as in the Proposition and, more importantly, cannot be easily generalized.

We, therefore, wrote a computer simulation to test our hypothesis in slightly more general settings for different values of $p, n, v^1$ and $M$, the number of possible bids (assuming that they are equally spaced, $B=\{0,\tfrac{1}{M},\ldots,\tfrac{M-1}{M}\}$).
For each $M\in \{4,\ldots,15\}$ and $n\in \{4,\ldots,15\}$, we constructed a grid with $100$ points for $(p,v^1)$ in the domain $[0.1,0.4]\times [0.09, 0.5]$ and calculated all possible pure-strategy equilibria for the chosen parameters.
Despite our efforts, we did not manage to find a counter-example for our conjecture, i.e. a set of parameters for which $\Sigma^*$ is the unique equilibrium with information but not the unique equilibrium without information.

\section{Alternative Assumption Regarding the Joint Distribution of the Private Valuations}\label{app:otherassumptionFn}

A close examination of the proof of Theorem \ref{thmEffUniqueWithLargeN} reveals that the uniform full support distribution can be slightly relaxed.
We chose the uniform full support assumption since it is both general and easy to verify.
Alternative assumptions are more general but harder to verify; moreover, they rely on the bidding strategies and not the distributions.
Here we present one such generalization.

Instead of the uniform full support assumption, assume that there exists $\delta>0$ such that for every $k\in\left\{1,...,K\right\}$, when a bidder has a valuation $v^k$, conditional on $D_k$\footnote{Recall, $D_k$ is when the types of all bidders are at most $v^k$.}
and given that:

\begin{itemize}
\item If each bidder with valuation $v^k$ bids within some range of values with probability $\bar{q}$ at most (independently from the other bidders having valuation $v^k$), then the probability that at most one other bidder will bid within that range is bounded from above by 
\begin{equation}
(1-\bar{q}\delta)^{n-1}+(n-1)\bar{q}\delta(1-\bar{q}\delta)^{n-2}. \nonumber
\end{equation}
This term comes from the best-case scenario for the bidder.
If all the other bidders have the private valuation $v^k$ with the lowest possible probability ($\d$), then the number of $v^k$-type bidders who bid in this range is a random variable $X$ with binomial distribution with parameters $n-1$ and $\d\bar{q}$, and the right-hand side is $\Pr(X\leq 1)$.

\item If we denote by $X$ the number of other bidders with valuation $v^k$, then $\mathbb{E}[\tfrac{1}{X+1}]\leq \tfrac{[1-(1-\delta\bar{q})^n]}{n\delta\bar{q}}$. 
The right-hand side is the expectation when $X$ has the binomial distribution with parameters $n-1$ and $\bar{q}\d$, i.e. when the bidders are independent (see Eq.\eqref{eq expectation}).
\end{itemize}

Note that this assumption is satisfied when valuations are independent with a lower bound of $\delta$ on the probabilities of the values in the support. 
However, some positive correlation typically makes this condition easier to satisfy -- given a bidder's valuation, the probability of bidders with similar valuations should increase.
Negative correlation is also possible.
However in extreme cases, if a bidder's valuation lowers the probability of the other bidders having a similar valuation, then the assumption does not hold.

\section{Continuous Auctions}\label{app:cont_auc}
The main text presented a two-stage auction model with discrete sets of valuations and possible bids.
This mirrors common practice, where currency is discrete and in many cases, auctioneers require ``rounded'' bids (for example, bids may have to be in increments of $10,000\$$).
Moreover, this assumption simplifies the analysis, as the existence of equilibria under all information structures is ensured in this model and the revenue-maximizing strategy profile, $\Sigma^*$, is well defined.
In this Appendix, we provide several results regarding the continuous case.
Although not a full analysis of the problem, our findings point to additional drawbacks of revealing information and strengthen the claim made in the main text.

Our model is identical to the model in Section \ref{Sect:model}, except for the continuity of the bids and valuations.
Hence, we assume that $B=[0,\infty)$ and $V=[v_{min},v_{max}]$ with $0\leq v_{min}<v_{max}$.
We only consider the full information case, and assume that the first-round bids of the two finalists are revealed before the second round.
The tie-breaking rule in the second round is not necessarily symmetric.
For example, it can favor the bidder who made the highest first-round bid.

In this model, the $\beta^j$s are not well defined and therefore neither is the revenue-maximizing strategy profile (or, if defined as the strategy profile in which all bidders bid their valuations in both rounds, it is obviously not an equilibrium).
Instead, we study the existence of an efficient equilibrium, i.e. an equilibrium in which the bidder with the highest valuations wins the item, which in many cases also provides the highest expected payoff to the seller. 
We show that under mild additional assumptions, there are no symmetric efficient equilibria in pure strategies.
This leads us to the conclusion that, as in the discrete model, there are no advantages to a BAFO stage whose sole purpose is to increase auctioneer's revenue, without providing additional information about the item being auctioned.
We note that it is not even clear that an equilibrium exists in this model.
We therefore prove in this Appendix that \emph{if} an equilibrium exists, it is not efficient.

Consider a symmetric efficient equilibrium in pure strategies.
In such an equilibrium there exists a function $f:V\to B$ which represents the first-round bid.
Since the equilibrium is efficient, $f$ must be strictly increasing (otherwise, the highest bidder might not be selected for the second round).
Hence, on the equilibrium path, valuations are revealed by the first-round bids.
The second-round bids can be described by the two functions $g,h:V\times V \to B$, where $g(x,y)$ is the second-round bid of the bidder who bid the highest value $f(x)$ in the first round and $h(x,y)$ is the bid of the runner-up, who bid $f(y)$ in the first round.
The equilibrium is efficient, so $g(x,y)\geq h(x,y)$.

For simplicity, we assume that the valuations are i.i.d. and drawn from $V$ with a probability distribution $p(v)$ that is continuous and positive on $V$.
This ensures that the probability that a valuation falls within a particular interval of (small) length $\epsilon$ is of order $\epsilon$.
The following Lemma provides additional properties of the equilibrium.

\begin{lemma} \label{lem:easyprop}
Let $x$ and $y$ be possible valuations, with $x > y$. Almost surely: 
\begin{enumerate} 
\item $y \leq g(x, y)  \leq x$. \label{lem:easyprop1} %Voir si on garde ces notations car v_1 et v_2 ne sont pas ici les valuations des joueurs 1 et 2. 
\item $g(x, y) = \max( f(x), h(x, y))$.\label{lem:easyprop2}
\item If $f(x) < y$, then $h(x, y) = g(x, y) \geq y$. \label{lem:easyprop3}
\item The assumption that $h(x, y) \geq y$ is made without loss of generality. \label{lem:easyprop4}
\item If $x > v_{min}$, then $f(x) < x$. \label{lem:easyprop5}
\end{enumerate}
\end{lemma}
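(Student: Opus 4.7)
The plan is to verify each of the five claims by invoking the equilibrium best-response conditions for the two players who reach the second round, treating them in the listed order so that later parts can use earlier ones.

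For part \ref{lem:easyprop1}, the upper bound $g(x,y)\leq x$ is immediate: since the winner (valuation $x$) does win in the second round, a bid strictly above $x$ yields strictly negative payoff, whereas dropping to any bid in $[f(x),x]$ that still wins (or losing outright) strictly improves matters. For the lower bound $g(x,y)\geq y$, suppose for contradiction that $g(x,y)<y$. Because $f$ is strictly increasing, $f(y)<f(x)\leq g(x,y)$, so the runner-up can deviate to a bid $g(x,y)+\eps$ with $\eps>0$ small enough that the bid is still below $y$ and compatible with the first-round bid $f(y)$. This deviation wins (up to the tie-breaking convention, for which $\eps$ handles any unfavorable rule) and yields strictly positive payoff, contradicting equilibrium. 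The ``almost surely'' takes care of the measure-zero set on which tie-breaking fires.

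Part \ref{lem:easyprop2} combines two best-response observations for the winner. The inequality $g(x,y)\geq\max(f(x),h(x,y))$ follows because second-round bids must dominate first-round bids, and efficiency of the equilibrium requires that the winner's bid be at least the runner-up's. Conversely, if $g(x,y)>\max(f(x),h(x,y))$, the winner could lower their second-round bid to exactly $\max(f(x),h(x,y))$ (or to $\max(f(x),h(x,y))+\eps$ under an unfavorable tie rule) while still winning, strictly increasing payoff—contradicting equilibrium. Part \ref{lem:easyprop3} is then a direct corollary: if $f(x)<y$, then by part \ref{lem:easyprop1} we have $f(x)<y\leq g(x,y)$, so the max in part \ref{lem:easyprop2} is realized by $h(x,y)$, giving $h(x,y)=g(x,y)\geq y$.

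For part \ref{lem:easyprop4}, note that on the equilibrium path the runner-up loses and obtains payoff $0$ regardless of the particular value of $h(x,y)$ (as long as $h\leq g$ and $h\geq f(y)$). Replacing $h(x,y)$ by $\max(h(x,y),y)$ leaves the runner-up's payoff unchanged, and by part \ref{lem:easyprop2} the winner's best response $g$ simply adjusts to the new max; no new profitable deviation is introduced for either player, so we may assume $h(x,y)\geq y$. Finally, for part \ref{lem:easyprop5}, if $f(x)\geq x$ for some $x>v_{min}$ then any legal second-round bid is at least $x$, so the type-$x$ bidder's payoff is at most $0$. Since $p$ is positive on $V$ and $x>v_{min}$, there is a positive-probability event on which all other valuations lie below $x$; a deviation to some $b\in(v_{min},x)$ would place the bidder in the top two and, by the argument in part \ref{lem:easyprop1} applied to the deviator's role, let them win in the second round at a price strictly below $x$, yielding strictly positive payoff—contradicting the supposed $0$-payoff equilibrium.

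The main obstacle is the interaction between continuous bids and the possibly asymmetric tie-breaking rule: several arguments want to say ``bid just above $X$,'' which requires the tie rule not to be pathologically favorable to the opponent. The lemma accommodates this by adding the ``almost surely'' qualifier and by letting part \ref{lem:easyprop4} normalize $h$ upward, so the residual ambiguity is confined to a measure-zero set of $(x,y)$.
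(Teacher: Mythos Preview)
Your proof follows the same logic as the paper's in all five parts, and the arguments for parts \ref{lem:easyprop1}--\ref{lem:easyprop4} match the paper's essentially verbatim. The one place to tighten is part \ref{lem:easyprop5}: saying ``deviate to some $b\in(v_{min},x)$'' does not by itself guarantee the deviator reaches the top two, since you have not yet established $f(y)<y$ for $y<x$ and hence cannot bound the other first-round bids below $x$; the paper sidesteps this by having the deviator mimic type $x-\eps$ (bid $f(x-\eps)$ in round one and then $g(x-\eps,y)$ in round two), which automatically gives the highest first-round bid on the event $\{\text{all other valuations}<x-\eps\}$ and, via part \ref{lem:easyprop1}, a winning price at most $x-\eps<x$.
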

\begin{proof} 
\begin{enumerate}
\item If $g(x,y) > x$, the player with valuation $x$ is bidding above his value and should slightly decrease his second-stage bid; if $g(x,y) < y$, the player with valuation $y$ should deviate and bid slightly above $g(x,y)$. 
\item The auction rules prevent bidding below the first-round bid, so $g(x,y) \geq f(x)$.
Clearly, $g(x,y) \geq h(x,y)$ because the auction is assumed to be efficient, so $g(x,y) \geq  \max( f(x), h(x, y))$. 
The reverse inequality holds because otherwise the winner should decrease his second-round bid. 
\item Immediate from \ref{lem:easyprop1} and \ref{lem:easyprop2}: $g(x,y)\geq y >f(x)$ so $g(x,y)\neq f(x)$ and the only remaining possibility is $g(x,y)=h(x,y)$.  
\item If we replace $h$ by $\tilde{h}$ defined by $\tilde{h}(x,y) = \max(y, h(x,y))$, then we still have an equilibrium. 
Indeed, this does not change anything when $f(x) < y$ by point \ref{lem:easyprop3}), and when $f(x) > y$, this does not change the equilibrium payoffs nor the possible deviations and their payoffs.
Therefore, we can assume that the equilibrium we have in hand is the one where $h(x,y)\geq y$. 

\item If $f(x)=x$ and $x > v_{min}$, then the bidder obtains a zero payoff though he could obtain a positive payoff by playing as if he had a slightly lower valuation.
\end{enumerate}
\hfill\end{proof}

For our proof, we still need an extra assumption.
Several natural assumptions are possible.
We choose to assume that the runner-up never bids above his value: 

\begin{assumption} \label{ass}
For all $x,y\in V$ with $x > y$, we assume $h(x,y) \leq y$.
\end{assumption}

Combined with Lemma \ref{lem:easyprop} (part \ref{lem:easyprop4}), on equilibrium path and without loss of generality, $h(x,y)=y$ and by part \ref{lem:easyprop2}, $g(x,y)=\max (f(x),y)$.
It follows that $g$ is non-decreasing in each of its arguments and strictly increasing if both arguments increase: $x<x'$ and $y<y'$ imply $g(x,y)<g(x',y')$.
We are now ready to show that in this model, an efficient equilibrium does not exist, i.e. that  the above properties cannot all hold together.

\begin{proposition}
In a two-stage BAFO auction with a continuous set of bids and valuations, there are no symmetric efficient equilibria in pure strategies when all bids are revealed to the bidders before the BAFO stage. 
\end{proposition}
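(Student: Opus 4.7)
The plan is to assume a symmetric efficient pure-strategy equilibrium exists, with strategies $(f,g,h)$ as derived above, and to exhibit a profitable round-1 deviation that contradicts the equilibrium. Fix any valuation $x$ with $v_{min} < x < v_{max}$. By Lemma~\ref{lem:easyprop}(\ref{lem:easyprop5}) we have $f(x) < x$, and since $f$ is strictly increasing this also yields $f^{-1}(x) > x$. I describe the construction for $n=2$; the extension to $n \geq 3$ proceeds by conditioning on the two highest valuations.

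Consider the deviation in which a bidder with valuation $x$ bids $b := f(x) - \delta$ in round~1 for a small $\delta > 0$ and plays a best response in round~2. Set $x_b := f^{-1}(b)$, so $x_b < x$; to the opponent, the deviator \emph{appears} to have valuation $x_b$. Off the equilibrium path, the opponent's round-2 behavior is determined by the equilibrium evaluated at the apparent value $x_b$: if the opponent has valuation $Y$ and is the top bidder in round~1, he bids $\max(f(Y),x_b)$, as prescribed by Lemma~\ref{lem:easyprop}(\ref{lem:easyprop2}) for a runner-up of valuation $x_b$. Partition by $Y$ into three regimes. (i) If $Y < x_b$, the deviator is still top in round~1 and pays $\max(b,Y) \leq \max(f(x),Y)$, saving up to $\delta$ per realization. (ii) If $x_b < Y < x$, the deviator is demoted to runner-up; for $\delta$ small enough one checks that $f(Y) < x_b$ throughout this interval, so the opponent's round-2 bid equals $x_b$, and the deviator (whose actual value is $x$) bids $x_b + \epsilon < Y$ in round~2 to win with payoff $x - x_b - \epsilon$, strictly exceeding the on-path payoff $x - Y$. (iii) If $Y > x$, the deviator had zero on-path payoff; but now the opponent bids $\max(f(Y),x_b) < x$ for every $Y \in (x, f^{-1}(x))$, so the deviator outbids by $\epsilon$ and wins with strictly positive payoff $x - \max(f(Y),x_b) - \epsilon$.

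The third regime is decisive: the event $\{Y \in (x, f^{-1}(x))\}$ has probability bounded below by a positive constant independent of $\delta$, while the per-event gain is of order $x - x_b > 0$, so its contribution to the expected gain is $\Theta(x - x_b)$. Regimes (i) and (ii) only add further positive gains; choosing $\epsilon \ll x - x_b$ renders the $\epsilon$-corrections negligible. Hence the deviation is strictly profitable for all sufficiently small $\delta$, contradicting the equilibrium assumption. The main technical hurdle is justifying the opponent's off-path round-2 bid formula $\max(f(Y), x_b)$: this relies on the opponent interpreting the observed first-round bid $b$ as the equilibrium signal of a bidder with valuation $f^{-1}(b) = x_b$ and playing his equilibrium response accordingly, so that it is $x_b$ and not the deviator's true valuation $x$ that governs the pricing in round~2.
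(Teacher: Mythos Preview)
Your approach is essentially the paper's: a small downward first-round deviation (pretending to have valuation $x_b$ instead of $x$), with the decisive gain coming from the event that the opponent's valuation lies above $x$ but his first-round bid lies below $x$ (your regime (iii), the paper's case A1). The exploitation mechanism---the top bidder prices off the apparent valuation $x_b$ rather than the true one, allowing the deviator to snipe---is identical.

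There is one genuine gap in your extension to $n\geq 3$. ``Conditioning on the two highest valuations'' is not enough: when $n\geq 3$ there is a new event, absent for $n=2$, in which the deviation knocks the deviator out of the second round entirely. Concretely, if $Y_1\geq Y_2$ are the two highest other valuations and $x_b<Y_2\leq Y_1<x$, then on path the deviator is top and wins with payoff $x-\max(f(x),Y_1)>0$, but off path both $f(Y_1)$ and $f(Y_2)$ exceed $b=f(x_b)$, so the deviator is not even selected and earns $0$. This is the paper's case B3. It does not kill the argument---its probability is $O((x-x_b)^2)$ and the loss is at most $x-x_b$, so the total damage is $O((x-x_b)^3)$, negligible against your $\Theta(x-x_b)$ gain from regime (iii)---but it must be accounted for. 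A smaller point: in regime (ii) your claim that the payoff ``strictly exceeds $x-Y$'' fails for $Y\in(x_b,x_b+\epsilon)$; this is harmless once you let the second-round overbid depend on the observed $Y$ (which you may, since $f(Y)$ is revealed), or simply bound the resulting loss by $O(\epsilon^2)$.
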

\begin{proof} Assume by contradiction that such an equilibrium exists and consider bidder $n$ with valuation $z>v_{min}$.
Let $x \geq y$ be the two highest  valuations among the other $n-1$ bidders.
Almost surely, $x > y$, which we now assume.
Let $\epsilon>0$ and consider a deviation by bidder $n$ that consists of bidding $f(z-\epsilon)$ in the first round instead of $f(z)$ and then playing optimally in the second round.
We study the gain from this deviation in all possible cases (with non-zero probability) and show that it is positive for $\epsilon$ small enough. 
The main two cases are the case where bidder $n$ does not have the highest valuation (``case A'': $z<x$) and where bidder $n$ has the highest valuation (``case B'': $z>x$).
The full analysis of all sub-cases that occur with positive probability is provided here and summarized in Table \ref{tab:sum}.

\textbf{Case A: $z<x$}

\begin{itemize}
\item [Case A.1:] $f(x)<z-\epsilon<z<x$.

 As $\epsilon$ goes to $0$, this case occurs with probability of order $1$ (bounded away from zero).
Bidder $n$'s equilibrium payoff is $0$, as the highest bidder wins the auction.
If bidder $n$ deviates, he is still selected and his opponent's second-round bid is $g(x,z-\epsilon)=h(x,z-\epsilon)$, so the deviation payoff is arbitrarily close to $\max (0, z-h(x,z-\epsilon))$, which is equal to $\epsilon$ under Assumption \ref{ass}.
Thus, the gain is of order $\epsilon$.
\item [Case A.2] Otherwise.

The payoff from the equilibrium strategy is $0$ and the payoff when deviating is non-negative, so the deviation gain is non-negative.
\end{itemize}

\textbf{Case B: $z>x$}
\begin{itemize}
\item [Case B.1] $z-\epsilon>x$. 

This occurs with probability of order 1.
The equilibrium payoff is $z-g(z,x)$ and the payoff when deviating is $z-g(z-\epsilon,x)$.
The gain from deviation is therefore 
\[g(z,x)-g(z-\epsilon,x)=\max (f(z),h(z,x))-\max (f(z-\epsilon),h(z-\epsilon,x))\]
under Assumption \ref{ass} this results in $\max (f(z),x)- \max (f(z-\epsilon),x)) \geq 0$.
\item [Case B.2] $y<z-\epsilon<x<z$.

This occurs with probability of order $\epsilon$.
The equilibrium payoff is $z-g(z,x)\leq z-x$, whereas when deviating, the payoff is arbitrarily close to $z-g(x,z-\epsilon)\geq z-x$.
Thus, there is at most an arbitrarily small loss from deviating.
Under Assumption \ref{ass}, the gain is actually strictly positive: 
\[g(z,x)-g(x,z-\epsilon)=\max (f(z),x)- \max (f(x),z-\epsilon)\geq x-\max (f(x),z-\epsilon)> 0. \]
\item [Case B.3] $z-\epsilon<y<x<z$.

This occurs with probability of order $\epsilon^2$.
The payoff when deviating is zero while the payoff in equilibrium is $z-g(z,x)\leq z-x \leq \epsilon$.
Hence, by deviating, the bidder loses at most $\epsilon$.
\end{itemize}
To summarize, under Assumption \ref{ass}, by bidding $f(z-\epsilon)$ instead of $f(z)$, the bidder can obtain a positive profit of order $\epsilon$ and a loss of order $\epsilon^3$.
Hence, for $\epsilon$ small enough, this is a profitable deviation and the functions $f,g,h$ do not form a symmetric efficient equilibrium.
\hfill\end{proof}
\begin{table}
\begin{center}
\begin{tabular}{|l|c|c|}
\hline
Case 						& Order & Deviation gain \\
\hline
 A1. $f(x)<z-\epsilon<z<x$ & 1 		& $\varepsilon$ \\
 \hline
 A2. Other cases with $z < x$ 			& 1 			& $\geq 0$ 					\\
 \hline
 B1. $z-\epsilon > x$ 				& 1 			&  $g(z, x) - g(z-\epsilon, x) \geq 0$  \\
 \hline 
 B2. $y < z-\epsilon < x < z$ 		& $\epsilon$ & $g(z, x) - g(x, z-\epsilon) > 0$ 	 \\
 \hline
 B3. $z-\epsilon < y < x < z$ 		& $\epsilon^2$ & $g(z,x) - z \geq x - z \geq - \epsilon$ \\ 
 \hline
\end{tabular}
\label{tab:sum}
\end{center}
\caption{Summary of the 5 cases under Assumption \ref{ass}.} 
\end{table}

This proposition shows that with information, equilibria, if they exist, cannot be efficient.
Contrastingly, without information the auction is equivalent to a first-price (single-stage) sealed-bid auction (the proof of Lemma \ref{thmNoInfoIsFPSA} is also valid in this model), so there exists an efficient equilibrium. 
Although efficiency does not directly imply best revenue for the seller, it is often the case and the lack of efficiency suggests that there might be disadvantages to revealing information.
We conclude that there are good reasons not to reveal information between the stages in the continuous case as well, which reinforces the paper's findings under the discrete model.

\end{document}